\pdfoutput=1

\documentclass[11pt,a4paper]{article}

\usepackage[T1]{fontenc}
\usepackage[utf8]{inputenc}
\usepackage[margin=2.9cm]{geometry}
\usepackage{amsmath,amssymb,amsthm}
\usepackage{mathtools}
\usepackage{booktabs}
\usepackage{enumitem}
\usepackage[colorlinks=true,linkcolor=blue,citecolor=blue,urlcolor=blue]{hyperref}

\theoremstyle{plain}
\newtheorem{theorem}{Theorem}[section]
\newtheorem{proposition}[theorem]{Proposition}
\newtheorem{lemma}[theorem]{Lemma}
\newtheorem{corollary}[theorem]{Corollary}

\theoremstyle{definition}
\newtheorem{definition}[theorem]{Definition}
\newtheorem{problem}[theorem]{Problem}

\theoremstyle{remark}
\newtheorem{remark}[theorem]{Remark}

\DeclareMathOperator*{\argmax}{arg\,max}
\DeclareMathOperator{\diam}{diam}

\newcommand{\R}{\mathbb{R}}
\newcommand{\Q}{\mathbb{Q}}

\newcommand{\N}{\mathbb{N}}
\newcommand{\ER}{\exists\R}
\newcommand{\FER}{\forall\R}
\newcommand{\EFR}{\exists\forall\R}
\newcommand{\NNReach}{\textsc{NNReach}}
\newcommand{\VIP}{\textsc{VIP}}
\newcommand{\NE}{\textsc{NE}}
\newcommand{\MIN}{\textsc{Min}}
\newcommand{\NI}{\textsc{NonInt}}
\newcommand{\MONO}{\textsc{Mono}}
\newcommand{\CF}{\textsc{CFair}}
\newcommand{\TRIG}{\textsc{Trigger}}
\newcommand{\UVUL}{\textsc{UnivVuln}}
\newcommand{\INV}{\textsc{InvRes}}
\newcommand{\IND}{\textsc{OutInd}}
\newcommand{\WM}{\textsc{WMRemove}}
\newcommand{\REP}{\textsc{Repair}}
\newcommand{\WRE}{\textsc{WFault}}
\newcommand{\WRA}{\textsc{WSafe}}
\newcommand{\ReLU}{\mathrm{ReLU}}
\newcommand{\ETR}{\textsc{Etr}}
\newcommand{\ETRINV}{\textsc{Etr-Inv}}
\newcommand{\RETRINV}{\textsc{Range-Etr-Inv}}
\newcommand{\Sig}[1]{\Sigma_{#1}^{\mathrm P}}
\newcommand{\Pih}[1]{\Pi_{#1}^{\mathrm P}}

\title{Security Properties of Neural Networks\\ as Decision Problems}
\author{Adrian Wurm\\
\small BTU Cottbus--Senftenberg, Lehrstuhl Theoretische Informatik\\
\small Platz der Deutschen Einheit 1, 03046 Cottbus, Germany\\
\small wurm@b-tu.de}
\date{25 September 2026}

\begin{document}
\maketitle

\begin{abstract}
Certifying a deployed neural network raises decision problems that the
verification literature has not classified: whether the model carries a
backdoor planted in its training data, whether a fault in
its stored parameters can drive it into an unsafe state, whether its output
leaks a private part of its input. We formalise eight such
problems and classify what we can.

The organising observation is a logical one. The function computed by a
piecewise linear network, together with all its node values, is definable
by a quantifier-free formula of real addition of size linear in the
network, so a property of the network is a quantifier-alternation sentence,
which Sontag's 1985 theorem places in the polynomial hierarchy at the level
of its prefix. Membership results are thus corollaries, and the
argument makes plain what they need: that the quantified objects are
inputs rather than the network's own parameters.

Non-interference, monotonicity and counterfactual fairness have exactly the
complexity of network equivalence and of interval verification, all co-NP-complete over
$\ReLU$. Detection of backdoor triggers from a
quantised alphabet is $\Sig2$-complete, one level above robustness
certification, so it does not reduce to polynomially many robustness
queries unless the hierarchy collapses. Inversion resistance is
co-NP-complete for every $\ell_p$ metric, $p$ a fixed positive integer.
Quantifying over \emph{parameters} instead of inputs - the fault model of
bit-flip attacks, radiation upsets and analog accelerators - makes
verification $\ER$-complete already for networks of identity nodes, for
which every previously studied problem is in $\mathrm P$, and it stays so
when each parameter is confined to a box of inverse-polynomial width; the
corresponding safety question is $\FER$-complete for $\ReLU$.
\end{abstract}

\begingroup
\renewcommand{\thefootnote}{}
\footnotetext{\emph{Key words and phrases:} real addition; polynomial
hierarchy; existential theory of the reals; computational complexity;
neural network verification; backdoor detection; non-interference; fault
injection.}
\footnotetext{\emph{2020 Mathematics Subject Classification:} 03B70, 68Q17,
68Q25, 68T07.}
\footnotetext{\emph{ACM CCS:} Theory of computation $\to$ Complexity
classes; Logic; Machine learning theory.}
\endgroup

\section{Introduction}

Let $N$ be a feedforward neural network whose activations are piecewise
linear with rational coefficients. The graph of $N$ - the set of pairs
consisting of an input and the induced vector of node values - is then
definable by a quantifier-free formula in the language of real addition,
of size linear in $N$: each node contributes a Boolean combination of
linear equations and inequalities of constant size, and the network
contributes their conjunction. A property of $N$ of the kind verification
asks about is therefore a sentence of real addition whose quantifier
prefix is the prefix of the property and whose quantifier-free part names
the network.
By Sontag's theorem \cite{Sontag}, deciding $k$-alternation sentences of
real addition is complete for $\Sig k$, so such a property lies at the
level of the polynomial hierarchy that its prefix dictates, and nowhere
else.

This is the organising fact of the present paper, and it has two
consequences that pull in opposite directions. On the one hand it makes
membership results automatic: the certificate arguments that the
verification literature has been reproving for a decade - guess the
activation pattern and the violated constraint, then solve a linear
program - are the case $k=1$ of a theorem from 1985, and the higher levels
come for free (Lemma~\ref{lem:uniform}). On the other hand it delimits
sharply where the polynomial hierarchy is the right home for such a
problem at all. The defining formula is a formula of real \emph{addition}
only because the quantified objects are inputs, which the weights
multiply by constants. Quantify instead over the network's own parameters,
and each node contributes a product of two unknowns; the defining formula
becomes polynomial, the problem leaves real addition, and its natural home
is the existential theory of the reals.

The paper develops both consequences by way of a schema. All the problems
we consider, together with all the previously studied ones, are instances
of
\[
  Q_1\,p\in\mathcal P\quad Q_2\,x\in A\ :\qquad \Phi\bigl(N_p,x\bigr),
\]
in which $x$ is an input, $p$ is a parameter that perturbs the situation,
$Q_1,Q_2$ are quantifiers, and $\Phi$ is a condition expressible by linear
constraints on the values of a network at a point. Two independent axes
govern the complexity. The first is the quantifier prefix, which places
the problem in the polynomial hierarchy. The second, which has not been
isolated before, is whether $p$ enters the computation linearly or
multiplicatively: input-space parameters enter linearly, parameter-space
parameters multiply activations, and that distinction alone decides
whether the problem lives in the polynomial hierarchy or in the real
hierarchy. Theorem~\ref{thm:wr} shows that the second axis is not a
technicality. For networks all of whose nodes compute the identity, every
problem previously studied lies in $\mathrm P$; making the parameters
uncertain, within boxes of inverse-polynomial width, makes verification
$\ER$-complete. Nothing about the activation function is involved, which
is exactly why the phenomenon is invisible in a classification indexed by
activation functions.

What the schema needs in order to be more than a bookkeeping device is a
supply of instances that actually occupy its cells, and the previously
studied problems do not: reachability, verification of an interval
property, network equivalence and the robustness notions all have a fixed
network and a single quantifier, so they sit in one corner of the grid,
with network minimisation the lone exception at the second level. Problems
with genuinely alternating prefixes, and problems that quantify over the
parameters, are supplied instead by the questions that certification of a
deployed model turns on. Whether a network carries a backdoor planted by
whoever supplied its training data is an $\exists\forall$ question over a
trigger. Whether a fault injected into the memory holding the parameters
can drive the network into an unsafe state is an $\exists\exists$ question
over the parameters. Whether a model reveals the part of its input that
was to stay private, whether a stolen copy can be stripped of its
watermark, and whether a network that failed certification can be patched
without invalidating the rest of the safety case, are further instances,
and they fall in four different cells. This paper proposes formalisations
of eight such properties, classifies what it can, and records in
Table~\ref{tab:schema} where each one lands. Three of them turn out to be
old problems in new clothes, which is worth knowing precisely because it
tells the reader which of the eight are new.

\subsection{Contributions}

\begin{enumerate}[label=(C\arabic*),leftmargin=*]
\item A uniform membership lemma (Lemma~\ref{lem:uniform}): for networks
whose activations are semilinear with rational coefficients, the graph of
the network is definable by a quantifier-free formula of real addition of
size linear in the network. Every membership result for such networks in
the polynomial hierarchy is therefore a corollary of Sontag's theorem
\cite{Sontag} that $k$-alternation sentences of real addition are complete
for $\Sig k$. The certificate arguments in the verification literature are
its $k=1$ case.

\item Three security properties turn out to have exactly the complexity of
problems that are already classified
(Section~\ref{sec:notnew}): non-interference that of network equivalence
(Proposition~\ref{prop:NI}), and monotonicity and counterfactual fairness
that of verification of an interval property
(Proposition~\ref{prop:mono}); all three are co-NP-complete over $\ReLU$
networks, and the hardness proofs are given for $\ReLU$ alone, with no
threshold activation assumed. The
equivalence is one of complexity only. The three are different
specifications, stated over different data and naturally implemented by
different front-ends; what the reductions give is that the classification
by activation function transfers to them verbatim, and that an existing
equivalence or interval checker decides them without being modified. For
monotonicity over $\ReLU$ networks the co-NP-completeness is already known
from the study of extracted rules \cite{WurmRules}, with which
Remark~\ref{rem:rules} compares.

\item Backdoor-trigger detection with a quantised trigger alphabet is
$\Sig 2$-complete for $\ReLU$ networks (Theorem~\ref{thm:trig}). We give
the reduction in full, and show in Section~\ref{sec:relax} that the
quantisation is not a technical convenience: the continuous relaxation
admits a strategy for the adversary - setting a literal and its negation
both to zero - that has no Boolean counterpart, so the relaxed problem
is a different one.

\item Inversion resistance is co-NP-complete
(Theorem~\ref{thm:inv}).

\item If the parameters carry the uncertainty, verification is
$\ER$-complete already for networks of identity nodes
(Theorem~\ref{thm:wr}), for which every previously studied verification
problem lies in $\mathrm P$, and it remains so when every parameter is
confined to a box of inverse-polynomial width. The corresponding safety
question is $\FER$-complete for $\ReLU$. The proof is a direct reduction
from the range-restricted inversion problem $\RETRINV$ \cite{AMS}.

\item Definitions and initial classifications for universal vulnerability,
output indistinguishability, watermark removability and repair, the last
of which lands naturally in the second level of the \emph{real} hierarchy
(Section~\ref{sec:repair}).
\end{enumerate}

\subsection{Related work}

On the logical side the paper rests on two classical results and their
surrounding literature. Sontag's theorem on real addition \cite{Sontag} is
what places the first-order properties of a piecewise linear network in the
polynomial hierarchy, and Lemma~\ref{lem:uniform} is little more than the
observation that the hypotheses of that theorem are met uniformly. For
$\ER$ as a complexity class, and for the catalogue of problems complete for
it, we follow the compendium of Schaefer, Cardinal and Miltzow
\cite{Compendium}; for the hierarchy above $\ER$ we follow Schaefer and
\v{S}tefankovi\v{c} \cite{BeyondER}. The complete problem we reduce from in
Section~\ref{sec:weights} is the range-restricted inversion problem of
Abrahamsen, Miltzow and Seiferth \cite{AMS}, in the streamlined form of
\cite{AMtoolbox}, whose range promise is what lets us shrink the parameter
boxes.

On the verification side, the complexity of reachability for $\ReLU$
networks is due to Katz et al.\
\cite{Katz} and S\"alzer and Lange \cite{SL}; the classification across
activation functions, and the problems $\VIP$, $\NE$, $\MIN$ and the
robustness family, are from \cite{Wurm23,Wurm24,WurmJ}, whose notation we
follow throughout. Training $\ReLU$ networks is $\ER$-complete
\cite{AKM}; Theorem~\ref{thm:wr} isolates the reason, which is not the
optimisation but the fact that parameters and activations multiply.
The complexity of checking rules extracted from a network is studied in
\cite{WurmRules}. The algorithmic side has matured in parallel: complete
verifiers built on SMT solving \cite{Katz,Marabou}, on mixed integer
programming \cite{Tjeng} and on branch and bound over bound propagation
\cite{BetaCrown} are compared annually in the international verification
competition \cite{VNNCOMP}. The lower bounds below say what those tools
cannot be extended to do without leaving the arithmetic they are built on.

Each property we consider has a literature of its own, which is
algorithmic or empirical rather than complexity-theoretic. Non-interference
originates with Goguen and Meseguer \cite{Goguen} and is surveyed for
programming languages by Sabelfeld and Myers \cite{Sabelfeld}; its
quantitative form, information leakage, has for neural networks been
approached by approximate model counting \cite{Baluta}. Monotonicity is enforced by
construction or certified post hoc in \cite{CertMono,CEGMono};
counterfactual fairness is due to Kusner et al.\ \cite{CFair}, individual
fairness to Dwork et al.\ \cite{FTA}, and its verification is treated in
\cite{VerifFair}. Backdoor attacks were introduced in \cite{BadNets} and
are scanned for by methods such as Neural
Cleanse \cite{NeuralCleanse}; universal adversarial perturbations are due
to Moosavi-Dezfooli et al.\ \cite{UAP}. Weight faults are studied as an
attack \cite{BitFlip}, as a reliability problem
\cite{ErrorProp}, as a consequence of analog hardware \cite{InMemory}, and
empirically as a robustness notion \cite{WeightPerturb}. Model inversion
\cite{ModelInversion} and membership inference \cite{MembershipInference}
motivate Section~\ref{sec:privacy}, backdoor-based watermarking
\cite{AdiWatermark} motivates Problem~\ref{prob:wm}, and network repair \cite{ProvableRepair,MinimalMod}
motivates Section~\ref{sec:repair}. To the best of our knowledge none of
these problems has previously been placed in a complexity class, with the
qualifications recorded below for backdoors and for parameter
perturbation.

Three lines of work stand close enough to ours that the relationship needs
stating. First, backdoors have been treated formally from two sides. Pham
and Sun \cite{PhamSun} give a verification procedure for what is
essentially our Problem~\ref{prob:trig}, combining statistical sampling
with abstract interpretation to certify that no input-agnostic trigger of a
given shape achieves a given success rate; Theorem~\ref{thm:trig} supplies
the complexity-theoretic floor under that procedure and, through the
argument in Section~\ref{sec:relax}, explains why its outer search over
candidate triggers cannot be replaced by a bounded number of inner
verification calls. Goldwasser, Kim, Vaikuntanathan and Zamir \cite{GKVZ}
prove a sharper-looking statement of an entirely different kind: a backdoor
can be planted so that the resulting network is \emph{computationally
indistinguishable} from an honestly trained one, so that no efficient
detector succeeds at all under standard cryptographic assumptions. Neither
result implies the other. Theirs is an average-case, cryptographic statement about a
particular planted construction, and it rules out efficient detection
outright; ours is an unconditional worst-case statement about the decision
problem itself, and it locates that problem at a specific level of the
polynomial hierarchy, which is what licenses the separation from robustness
certification drawn in Section~\ref{sec:relax}. A $\Sig2$-complete problem
may still have easy instances, and a cryptographically undetectable
backdoor says nothing about where the worst case sits.

Second, the arithmetic in which a network is evaluated is a third axis,
orthogonal to the two studied here. Alsmann, Lange and S\"alzer \cite{ALS}
classify verification when the network is evaluated in finite-width fixed-
or floating-point arithmetic and obtain PSPACE-completeness for bit-vector
specifications. We touch that axis twice, in Section~\ref{sec:relax} and in
Section~\ref{sec:repair}, at the points where quantising a continuous
search space changes a problem's class rather than approximating it.

Third, and closest to Section~\ref{sec:weights}, Soltanalian
\cite{Soltanalian} studies exact $\ReLU$ verification in a smoothed model
in which every weight and bias is independently perturbed by clipped,
rounded Gaussian noise, and shows that no sound and complete verifier runs
in expected polynomial time under the perturbation unless
$\mathrm{NP}\subseteq\mathrm{BPP}$. The question there is different from
ours in a way worth making explicit: the noise is a \emph{preprocessing} of
the instance, applied once and at random, after which the problem asked is
still the input-space question $\forall x\in X$, and the conclusion is that
worst-case hardness survives randomisation of the parameters. Here the
parameters are instead \emph{quantified inside the decision problem}, over
an adversarially chosen uncertainty set, and the conclusion is that the
problem leaves the polynomial hierarchy altogether. The two results point
the same way from opposite directions - randomising the parameters does
not make verification easy, and quantifying over them makes it strictly
harder - and neither subsumes the other.

\section{Preliminaries}

We follow \cite{WurmJ}. A \emph{feedforward neural network} $N$ is a
layered graph computing a function $\R^n\to\R^m$; node $i$ of layer $\ell$
computes $y_{\ell i}=\sigma_{\ell i}(\sum_j c^{(\ell-1)}_{ji}
y_{(\ell-1)j}+b_{\ell i})$ with rational weights $c$ and biases $b$ and
activation functions $\sigma_{\ell i}$ drawn from a set $F$; such a network
is an \emph{$F$-network}. An \emph{LP specification} is a system
$Ax\le b$ of linear inequalities with rational data. We write
$\NNReach(F)$, $\VIP(F)$, $\NE(F)$ and $\MIN(F)$ for reachability,
verification of an interval property, network equivalence and network
minimisation, as in \cite{WurmJ}. The \emph{classification} of an input is
$\argmax_j N(x)_j$, with ties broken in favour of the smallest index, so
that the classification is a total function.

Every node of an $F$-network applies an activation, the output nodes
included. For $F=\{\ReLU\}$ this means that all node values, and in
particular all outputs, are non-negative; linear combinations with negative
coefficients are formed in the linear part of the \emph{next} node, which
is where we use them below.

A set is \emph{semilinear} if it is a Boolean combination of half-spaces;
an activation is semilinear if its graph is. $\ReLU$, Heaviside,
$\mathrm{sign}$, the identity, leaky $\ReLU$ and every rational step
function are semilinear with rational coefficients. We write $\ER$ for the
complexity class of problems reducing in polynomial time to the
existential theory of the reals $\ETR$, and $\EFR$ for the analogous class
one alternation up \cite{BeyondER}.

We use the following results on the two classes involved.

\begin{theorem}[Sontag \cite{Sontag}]\label{thm:sontag}
Fix $k\ge1$ and consider sentences
\[
  Q_1\bar z_1\ Q_2\bar z_2\ \cdots\ Q_k\bar z_k\ F(\bar z)
\]
in which the quantifiers $Q_i$ alternate and $F$ is quantifier-free in the
language of real addition with binary-encoded rational coefficients.
Deciding such a sentence is log-complete for $\Sig k$ when $Q_1=\exists$,
and log-complete for $\Pih k$ when $Q_1=\forall$.
\end{theorem}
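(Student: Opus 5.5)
The statement has two halves, membership and hardness, and I will prove them separately.

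\emph{Membership.} I would follow the quantifier-elimination strategy of Ferrante--Rackoff, refined so that it respects the prefix. The key step is a bounded-model lemma for each block of quantifiers. Fix values of the outer variables. Then any existential witness block can be replaced by a point of the form $(\text{rational combination of the outer variables and constants})$. This holds because the quantifier-free matrix is a Boolean combination of linear atoms $a\cdot \bar z\le c$. Satisfying regions are therefore unions of cells of an arrangement, and each nonempty cell contains a vertex-like point. That point is obtained by solving a square subsystem of the atoms by Cramer's rule, taking midpoints of adjacent such points when cells are open.

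The quantitative consequence, in the form Sontag uses, is that it suffices to range each block $\bar z_i$ over rationals whose bit size is polynomial in the formula size. The polynomial depends on $i$, but that is fine because $k$ is fixed. Hadamard's bound keeps the determinants polynomially bounded, and iterating over $k$ blocks multiplies the size only by a factor polynomial for fixed $k$. With that lemma in hand, the sentence becomes an alternating $k$-round game over polynomial-size rational strings. Evaluating the matrix at rational points is polynomial time, so the problem is in $\Sig k$, or $\Pih k$ when the prefix starts with $\forall$.

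\emph{Hardness.} I would reduce from the quantified Boolean formula problem with $k$ alternating blocks, $\exists\bar x_1\forall\bar x_2\cdots\varphi$, under log-space reductions. The obstacle is that real quantifiers range over $\R$, not $\{0,1\}$. The encoding I would use is as follows.
\begin{itemize}
\item For existential blocks, conjoin the constraint $z\in\{0,1\}$, written as $z=0\lor z=1$.
\item For universal blocks, use the implication $(\bigwedge z\in\{0,1\})\to\cdots$, pushed into the matrix.
\item Translate a literal $x$ as $z=1$ and $\neg x$ as $z=0$.
\end{itemize}
The matrix stays a quantifier-free Boolean combination of linear atoms of linear size. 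The guard on each block must be attached with the correct polarity so that non-Boolean choices are harmless. For $\exists$ the guard is a conjunct, and for $\forall$ it is a hypothesis. One has to check that nesting guards in this way preserves equivalence. A straightforward induction on $k$ shows it does, and the whole construction is clearly computable in logarithmic space.

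\emph{Main obstacle.} The delicate part is the membership size bound. It requires showing that after substituting witnesses for the inner blocks, which are linear in the outer variables, the atoms of the residual formula still have bit size polynomially bounded for fixed $k$, and that the number of candidate witnesses can be enumerated implicitly as a certificate rather than explicitly. I would handle this by bounding coefficient growth per elimination step by a constant factor in the exponent, and by guessing the defining subsystem as part of the certificate.
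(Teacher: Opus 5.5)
The paper does not prove this statement; it quotes it from Sontag's 1985 paper. Your outline is the standard argument behind that citation. For membership, you relativise every block to rationals of polynomially bounded bit size, with the bound controlled by Cramer's rule and Hadamard's inequality so that it stays polynomial for fixed $k$. For hardness, you reduce from $k$-alternation QBF by the guarded Boolean encoding. The hardness half is correct as written.

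The membership half needs three repairs, none of which changes the route.

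\emph{First, the candidate points.} Vertices and midpoints of pairs of Cramer points do not meet every cell once a block has two or more variables. Three lines bounding a triangle yield only the three vertices, and all their midpoints lie on the lines, so no candidate falls in any two-dimensional cell. A single line yields no Cramer point at all. To fix this, adjoin the coordinate hyperplanes of the block so that every cell is pointed. Then use barycentres of at most $n_i+1$ affinely independent points of the cell's closure, chosen among its vertices and vertices shifted along its recession directions. The recession directions solve homogeneous subsystems, so they do not depend on the outer variables.

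\emph{Second, the formula the cell argument runs on.} For any block other than the innermost, the argument must be run on the residual quantifier-free formula left after eliminating the inner blocks, not on $F$; you note this yourself. That formula has exponentially many atoms. The witness bound must therefore depend only on the block dimension and the bit size of a single atom, which is exactly what a bound via $n_i\times n_i$ subsystems gives. For a universal block, apply the same bound to the set of counterexamples.

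\emph{Third, the growth rate.} Eliminating a block of $n_i$ variables multiplies the atom bit size by a factor polynomial in $n_i$, not by a constant. What keeps the final bound polynomial is that only $k$ blocks are eliminated. If "elimination step" is read per variable instead, a constant-factor growth per step compounds to $2^{\Theta(n_i)}$ and destroys the bound.
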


\begin{theorem}[Abrahamsen, Adamaszek, Miltzow \cite{AAM}]\label{thm:etrinv}
$\ETRINV$ is $\ER$-complete: given variables $\xi_1,\dots,\xi_k$ confined
to $[\tfrac12,2]$ and a list of constraints, each of the form
$\xi_a+\xi_b=\xi_c$ or $\xi_a\cdot\xi_b=1$, decide satisfiability.
\end{theorem}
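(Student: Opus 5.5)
Membership in $\ER$ is immediate. An instance of $\ETRINV$ is already an existential sentence over the reals. We quantify $\xi_1,\dots,\xi_k$ existentially and take the conjunction of the range constraints $\tfrac12\le\xi_i\le2$ with the listed equations $\xi_a+\xi_b=\xi_c$ and $\xi_a\xi_b=1$. This sentence has size linear in the instance and is decided by any $\ETR$ procedure.

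For hardness I would follow the standard chain of reductions from $\ETR$, routed through the feasibility problem for systems of polynomial equations.

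\textbf{Step 1.} Reduce $\ETR$ to deciding whether a family of equations of the form $x+y=z$, $x\cdot y=z$, $x=1$ has a real solution. Strict inequalities are removed with auxiliary variables ($p>0$ becomes $p\cdot w^2=1$), and disjunctions are replaced by products.

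\textbf{Step 2.} Compress the solution set into a tiny box. By the quantitative bounds of Schaefer and \v{S}tefankovi\v{c}, if such a system of size $L$ is satisfiable, it has a solution in the ball of radius $2^{2^{O(L)}}$. I would introduce $\varepsilon=2^{-2^{cL}}$ by repeated squaring ($\varepsilon_0=\tfrac12$, $\varepsilon_{i+1}=\varepsilon_i^2$), using $O(L)$ extra constraints. I then substitute $x=X/\varepsilon'$ for a suitable power $\varepsilon'$ and rehomogenise each constraint with powers of $\varepsilon$. After this rescaling, every variable of an equivalent system can be assumed to lie in $[-\delta,\delta]$ for a small constant $\delta$, say $\delta=1/100$, and satisfiability is unchanged.

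\textbf{Step 3.} Shift every value into the window around $1$. Each small variable $x$ is represented by $\xi=1+x\in[\tfrac12,2]$. The constant $1$ is obtained from a variable $\xi_1$ with $\xi_1\cdot\xi_1=1$, which forces $\xi_1=1$ because $\xi_1>0$. Addition $x+y=z$ becomes $\xi_x+\xi_y=\xi_z+1$. I would realise this through an auxiliary variable near $2$, or more uniformly by working with halved offsets, so that every sum appearing in a constraint stays inside $[\tfrac12,2]$.

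Multiplication is the part that needs gadgets. First, squaring is reduced to inversion via
\[
\frac{1}{t}-\frac{1}{t+1}=\frac{1}{t^2+t}.
\]
Second, a general product is reduced to squaring via
\[
xy=\tfrac12\bigl((x+y)^2-x^2-y^2\bigr).
\]
Each subtraction $a-b=c$ is just the addition constraint $b+c=a$. Each inversion is one constraint $\xi\cdot\eta=1$. Halving is expressed as $h+h=a$.

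\textbf{Main obstacle.} The hard part is the range bookkeeping in Step 3. Every intermediate quantity in these gadgets (for example $t$, $t+1$, $1/(t^2+t)$, and the doubled sums) must be represented by a variable that provably stays in $[\tfrac12,2]$. It is not automatic that they do, and the shifted encoding changes each identity by additive constants.

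My first attempt would be to apply the gadgets not to $x$ directly but to affine images of the form $1+\lambda x$, with fixed rational $\lambda$ built from the constant $1$ by additions. I would then expand each identity, and the extra linear terms in $x$ cancel against terms computed through addition constraints. Because $|x|\le\delta$ is tiny, every intermediate value is within $O(\delta)$ of a fixed constant in $(\tfrac12,2)$. That is the reason for the compression in Step 2.

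Finally, I would check that the reduction is polynomial-time, and that both directions of the equivalence hold. A solution of the original system yields one of the $\ETRINV$ instance by the explicit formulas. Conversely, the gadget constraints force the encoded products, so any solution of the $\ETRINV$ instance decodes to a solution of the original system.
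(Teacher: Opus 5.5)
The paper gives no proof of this statement; it quotes it from Abrahamsen, Adamaszek and Miltzow, in the variant without a constraint $x=1$, which your $\xi\cdot\xi=1$ trick correctly recovers. Your outline follows their route: $\ETR$ to a system of constraints $x+y=z$, $x\cdot y=z$, $x=1$; compression into a small box by the doubly exponential ball bound and repeated squaring; a shift to a neighbourhood of a constant; squaring from inversion via $\frac1t-\frac1{t+1}=\frac1{t^2+t}$; and products from squares. Membership is fine. So is the converse direction, because the gadgets are exact identities for positive arguments and force the encoded relations whatever the values are.

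The gap is in the step you single out as the main obstacle. It is not only unverified: the version you commit to fails.

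\emph{Centring at $1$ fails.} With $t=1+\lambda x$ the gadget variables are $t+1=2+\lambda x$ and $t^2+t=2+3\lambda x+\lambda^2x^2$. Both leave $[\frac12,2]$ as soon as $\lambda x>0$, and their inverses $\frac1{t+1}$ and $\frac1t-\frac1{t+1}$ then drop below $\frac12$. The compressed variables take both signs, so your claim that every intermediate value is within $O(\delta)$ of a constant in the open interval $(\frac12,2)$ is false for this choice. The forward direction therefore breaks: a solution of the compressed system need not give an in-range solution of the $\ETRINV$ instance. Your addition gadget via an auxiliary variable near $2$ has the same defect.

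\emph{The repair.} Centre the gadgets strictly inside the interval, e.g.\ at $t_0=\frac34$. There the values $t_0$, $t_0+1=\frac74$, $\frac1{t_0}=\frac43$, $\frac1{t_0+1}=\frac47$, $\frac{16}{21}$ and $t_0^2+t_0=\frac{21}{16}$ all lie in $(\frac12,2)$. You then still have to write out the linear bookkeeping that extracts $x^2$ from $t^2+t$. Always add a constant before halving, so that values like $\frac12+\frac x2$ never occur. This bookkeeping is the actual content of the reduction and has to be done explicitly.

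\emph{A smaller inconsistency of the same kind.} Starting the squaring chain at $\varepsilon_0=\frac12$ contradicts your claim that every compressed variable lies in $[-\delta,\delta]$ with $\delta=\frac1{100}$. It also matters because your multiplication gadget is only valid on small arguments. Start the chain at the constant $\delta$ instead.
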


Section~\ref{sec:weights} needs the range-restricted refinement of
$\ETRINV$, in which each variable is confined not merely to $[\frac12,2]$
but to a prescribed subinterval of it that may be very short. We follow
\cite{AMS}, with the streamlined exposition in \cite{AMtoolbox}; for $\ER$
and its complete problems generally we follow the compendium
\cite{Compendium}.

\begin{definition}[$\RETRINV$]\label{def:retrinv}
An instance consists of variables $\xi_1,\dots,\xi_k$, for each $i$ a
rational interval $I(\xi_i)\subseteq[\tfrac12,2]$, and a list of
constraints of the two forms above. The question is whether there is an
assignment with $\xi_i\in I(\xi_i)$ for every $i$ satisfying every
constraint. The instance has \emph{range parameter} $\delta$ if
$|I(\xi_i)|\le2\delta$ for every $i$.
\end{definition}

\begin{theorem}[Abrahamsen, Miltzow, Seiferth {\cite[Thm.~3]{AMS}}]
\label{thm:retrinv}
$\RETRINV$ is $\ER$-complete, and remains so when restricted to instances
of range parameter $\delta=O(k^{-c})$, for any constant $c>0$ fixed in
advance. Theorem~\ref{thm:etrinv} is the special case in which every
$I(\xi_i)=[\tfrac12,2]$.
\end{theorem}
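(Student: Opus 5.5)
Membership in $\ER$ is immediate: an instance of $\RETRINV$ is an existential sentence over the reals, namely the conjunction of the interval constraints $\xi_i\in I(\xi_i)$ (linear inequalities with rational endpoints) and the constraints $\xi_a+\xi_b=\xi_c$, $\xi_a\xi_b=1$. The special case $I(\xi_i)=[\tfrac12,2]$ is literally Definition~\ref{def:retrinv} with trivial intervals, so it coincides with Theorem~\ref{thm:etrinv}. The substance is hardness for range parameter $\delta=O(k^{-c})$. I would reduce from a compact form of $\ETR$: deciding whether a system of polynomial equations of the normal forms $x=1$, $x+y=z$, $xy=z$ has a solution in $[-1,1]^n$. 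This form is $\ER$-hard, and it is what the proof of Theorem~\ref{thm:etrinv} starts from.

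The core idea is a change of variables that zooms into a tiny box. Fix a rational $\varepsilon=\Theta(k^{-c'})$ with $c'$ chosen large relative to $c$. Represent each original variable $x\in[-1,1]$ by an $\ETRINV$-variable $\xi_x=1+\varepsilon x$, which lies in $[1-\varepsilon,1+\varepsilon]\subseteq[\tfrac12,2]$. This interval has width $2\varepsilon$. The constraints translate as follows.
\begin{itemize}
\item Addition $x+y=z$ becomes $\xi_x+\xi_y=\xi_z+1$. This is expressed through a fixed auxiliary variable equal to a known constant, obtained by chaining addition and inversion gadgets from the constant $1$ (itself forced by $\xi\cdot\xi=1$ with $\xi$ confined near $1$). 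Every intermediate variable is kept near a known value.
\item Multiplication $xy=z$ becomes $(\xi_x-1)(\xi_y-1)=\varepsilon(\xi_z-1)$. I would express this with the classical gadgets $u\mapsto 1/u$, $u^2=\bigl(\tfrac1{u}-\tfrac1{u+1}\bigr)^{-1}-u$ and $uv=\tfrac12\bigl((u+v)^2-u^2-v^2\bigr)$, all of which are built from inversions and additions.
\end{itemize}
Every variable in these gadgets is a fixed rational function of quantities of the form $1+O(\varepsilon)$. Each therefore has an a priori known centre, computable from the gadget, and deviates from it by $O(\varepsilon)$. I assign it the interval of radius $C\varepsilon$ around that centre, with $C$ an absolute constant.

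The main obstacle is this bookkeeping for the auxiliary variables. Three things must hold simultaneously.
\begin{itemize}
\item Every centre lies in $[\tfrac12,2]$ with room to spare. This requires rescaling by constant shifts: for instance, $u+1$ sits near $2$, so I would halve through the gadget $2w=u+1$ to stay inside.
\item The deviations stay $O(\varepsilon)$ and do not blow up. Products of small deviations are second order, so they do not.
\item The interval constraints are implied by the gadget equations whenever the original $x$ lie in $[-1,1]$, so that no solutions are lost.
\end{itemize}
Given the first two, the third follows. Soundness is then automatic, since the equations force the original polynomial constraints exactly. Finally, the number $k$ of variables is polynomial in $n$, say $k\le n^{a}$. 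Choosing $\varepsilon\le n^{-ac}/C$ makes $|I(\xi_i)|\le 2C\varepsilon=O(k^{-c})$, and all constants are rationals of polynomial bit length, as required.
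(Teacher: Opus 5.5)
The paper gives no proof of this statement---it is imported from \cite[Thm.~3]{AMS}, with the streamlined exposition of \cite{AMtoolbox}---and your sketch is essentially the argument of those sources, so it is correct at the level of a sketch: compress a compact $\ETR$ instance into a window of width $O(\varepsilon)$ around a fixed centre, realise addition and multiplication by inversion-based gadgets with constant rescalings, and propagate intervals of radius $O(\varepsilon)$ with $\varepsilon$ polynomially small. The one step to state more carefully is the rescaling. A quantity such as $u+1$, whose range straddles $2$, cannot itself be a variable, so a map like $w=(u+1)/2$ must be routed through sums whose centres lie inside $[\tfrac12,2]$, for instance $a=u+\tfrac12$, $b+b=a$, $t=b+\tfrac34$, $w+\tfrac12=t$; the factor $\varepsilon$ in the product constraint is obtained by $O(\log(1/\varepsilon))$ iterations of that map, and constants can simply be pinned by degenerate intervals.
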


\subsection{A network we shall build repeatedly}

Several hardness proofs below run through one $\ReLU$-network encoding of a
$3$-CNF formula, a variant of the network of \cite[Thm.~9(ii)]{WurmJ}. The
statement records only what the later arguments use, namely how the network
behaves on Boolean inputs and the sense in which no real input does better
than the Boolean input it rounds to; the construction itself is deferred to
the proof.

Write $\rho:\R^n\to\{0,1\}^n$ for coordinatewise rounding,
\[
  \rho(z)_a:=1 \text{ if } z_a>\tfrac12, \qquad
  \rho(z)_a:=0 \text{ if } z_a\le\tfrac12,
\]
and, for $b\in\{0,1\}^n$, write $\mathrm{sat}_\varphi(b)$ for the number of
clauses of $\varphi$ satisfied by the assignment $b$.

\begin{lemma}[CNF network]\label{lem:cnf}
Let $\varphi$ be a $3$-CNF formula with $n$ variables and $m\ge1$ clauses.
There is a $\ReLU$-network $M_\varphi:\R^n\to\R$ with $n$ input nodes, two
hidden layers and $4n+2m+1$ further nodes, computable from $\varphi$ in
linear time, such that
\begin{enumerate}[label=(\roman*),leftmargin=*]
\item $M_\varphi(b)=\mathrm{sat}_\varphi(b)$ for every $b\in\{0,1\}^n$;
\item $M_\varphi(z)\le\mathrm{sat}_\varphi(\rho(z))$ for every $z\in\R^n$,
so that $0\le M_\varphi\le m$ everywhere;
\item $M_\varphi(\tfrac12,\dots,\tfrac12)=0$.
\end{enumerate}
\end{lemma}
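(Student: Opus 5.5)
The plan is to build $M_\varphi$ explicitly with three layers of $\ReLU$ nodes: $4n$ literal nodes, then $2m$ clause nodes, then one output node. The count $4n+2m+1$ then comes out exactly. The guiding idea is to give every literal a \emph{clipped ramp} value that is positive only strictly on the side of $\tfrac12$ where that literal is true under $\rho$. Each clause is then the truncation $\min(1,\cdot)$ of the sum of its literal values.

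\emph{First hidden layer.} For each input $z_a$ we introduce four nodes
\[
\ReLU(2z_a-1),\quad \ReLU(2z_a-2),\quad \ReLU(1-2z_a),\quad \ReLU(-2z_a).
\]
Their differences give the literal values
\[
\ell_a^+:=\ReLU(2z_a-1)-\ReLU(2z_a-2),\qquad \ell_a^-:=\ReLU(1-2z_a)-\ReLU(-2z_a).
\]
These are the clipped ramps $\min(1,\max(0,2z_a-1))$ and $\min(1,\max(0,1-2z_a))$. The differences are not materialised as nodes. As noted in the preliminaries, they are formed in the linear part of the next layer.

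The facts to record are:
\begin{itemize}
\item $\ell_a^\pm\in[0,1]$;
\item on $z_a\in\{0,1\}$ they equal exactly the truth values of $x_a$ and $\neg x_a$;
\item $\ell_a^+>0$ forces $z_a>\tfrac12$, i.e.\ $\rho(z)_a=1$, and $\ell_a^->0$ forces $z_a<\tfrac12$, i.e.\ $\rho(z)_a=0$;
\item both vanish at $z_a=\tfrac12$.
\end{itemize}

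\emph{Second hidden layer.} For a clause $C_j$, let $s_j$ be the sum of the (at most three) literal values of its literals. This is a linear combination with coefficients $\pm1$ of first-layer nodes. We use two nodes $\ReLU(s_j)$ and $\ReLU(s_j-1)$. Their difference $\kappa_j:=\ReLU(s_j)-\ReLU(s_j-1)=\min(1,s_j)$, since $s_j\ge0$, is again formed in the next layer. The output node is $\ReLU\bigl(\sum_j \kappa_j\bigr)=\sum_j\kappa_j$, because each $\kappa_j\ge0$. Everything is written down clause by clause and variable by variable, so it is computable in linear time.

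\emph{Verification of (i)--(iii).} For (i), on $b\in\{0,1\}^n$ every $s_j$ is a non-negative integer. It is nonzero exactly when some literal of $C_j$ is true, so $\kappa_j$ is the indicator that $b$ satisfies $C_j$, and the output is $\mathrm{sat}_\varphi(b)$. For (ii), $\kappa_j\in[0,1]$, and $\kappa_j>0$ requires some literal value of $C_j$ to be positive. By the key fact above, that literal is then true under $\rho(z)$, so $C_j$ is satisfied by $\rho(z)$. Hence $\kappa_j\le[\rho(z)\models C_j]$, and summing gives $M_\varphi(z)\le\mathrm{sat}_\varphi(\rho(z))\le m$. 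Non-negativity is automatic from the final $\ReLU$. For (iii), all literal values vanish at $z=(\tfrac12,\dots,\tfrac12)$, so all $s_j=0$ and the output is $0$.

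The only delicate point, and the step I would check most carefully, is the strictness in the rounding convention. The ramp for $\ell_a^+$ must start at exactly $\tfrac12$, so that at the tie point $z_a=\tfrac12$ (where $\rho$ rounds to $0$) the positive literal contributes nothing. The ramp for $\ell_a^-$ must vanish for $z_a\ge\tfrac12$. The choice of the offsets $2z_a-1$ and $1-2z_a$ is made precisely so that both literals are zero at the tie, and that is what makes (ii) and (iii) hold simultaneously.
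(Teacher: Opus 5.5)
Your proposal is correct and is essentially the paper's own proof. It uses the same four first-layer nodes per variable (your $\ReLU(2z_a-1)$ is the paper's $2\ReLU(z_a-\tfrac12)$ with the factor moved inside), the same two-node clause truncation $\ReLU(s_C)-\ReLU(s_C-1)=\min(1,s_C)$, the same summing output node, and the same arguments for (i)--(iii), including the point that both literal ramps vanish at the tie $z_a=\tfrac12$. One harmless nit: a clause with a repeated literal, such as the $(u\vee\neg u\vee u)$ used later in the paper, gives $s_C$ a coefficient $\pm2$ rather than $\pm1$, which changes nothing.
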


Items (i) and (ii) are what the identification of $\R^n$ with $\{0,1\}^n$
buys. On the Boolean points the network counts satisfied clauses exactly,
and every other point is dominated by the Boolean point it rounds to, so
nothing is gained by leaving the cube: maximising $M_\varphi$ over $\R^n$
and over $\{0,1\}^n$ give the same value. Item (iii) is of a different
kind, and is the one place where a real input does strictly worse than its
rounding would suggest; Section~\ref{sec:relax} turns on it.

\begin{proof}
For each propositional variable we allocate one input node, for each
literal a pair of nodes in the first hidden layer, for each clause a pair
of nodes in the second hidden layer, and for the evaluation a single node
in the output layer, which follows directly after the second hidden layer.
Linear combinations with negative coefficients are formed in the linear
part of the following node, as everywhere in this paper.

\emph{Literals.} For a variable $a$ with input $z_a$ put
\[
  \alpha_a:=2\ReLU\!\left(z_a-\tfrac12\right)-2\ReLU\!\left(z_a-1\right),
  \qquad
  \alpha_{\neg a}
    :=2\ReLU\!\left(\tfrac12-z_a\right)-2\ReLU\!\left(-z_a\right),
\]
two $\ReLU$ nodes each. So $\alpha_a$ is $2(z_a-\frac12)$ clamped to
$[0,1]$ and $\alpha_{\neg a}$ is $2(\frac12-z_a)$ clamped to $[0,1]$. Three
consequences are used below: $\alpha_\lambda\in[0,1]$ for every literal
$\lambda$; $\alpha_a(z)>0$ exactly when $z_a>\frac12$ and
$\alpha_{\neg a}(z)>0$ exactly when $z_a<\frac12$, so at most one of the
two is non-zero and both vanish precisely at $z_a=\frac12$; and for
$b\in\{0,1\}^n$ one has $\alpha_\lambda(b)=1$ if $\lambda$ is true under
$b$ and $\alpha_\lambda(b)=0$ otherwise.

\emph{Clauses.} For a clause $C=(\lambda_1\vee\lambda_2\vee\lambda_3)$ put
$s_C:=\alpha_{\lambda_1}+\alpha_{\lambda_2}+\alpha_{\lambda_3}$ and
$\gamma_C:=\ReLU(s_C)-\ReLU(s_C-1)$, again two $\ReLU$ nodes, so that
$\gamma_C$ is $s_C$ clamped to $[0,1]$.

\emph{Output.} $M_\varphi:=\sum_C\gamma_C$, one node, whose argument is
non-negative so that the $\ReLU$ on it acts as the identity. The node count
is $4n+2m+1$.

(i) For $b\in\{0,1\}^n$ each $\alpha_\lambda(b)$ is $1$ or $0$ according as
$\lambda$ is true or false under $b$. Hence $s_C\ge1$ exactly when $C$ is
satisfied by $b$, so $\gamma_C=1$ for the satisfied clauses and
$\gamma_C=0$ for the others, and the output is $\mathrm{sat}_\varphi(b)$.

(ii) Fix $z$ and let $C$ be a clause false under $\rho(z)$. Every literal
$\lambda$ of $C$ is false under $\rho(z)$: if $\lambda=a$ this means
$z_a\le\frac12$, hence $\alpha_a(z)=0$; if $\lambda=\neg a$ it means
$z_a>\frac12$, hence $\alpha_{\neg a}(z)=0$. So $s_C=0$ and $\gamma_C=0$.
Every remaining clause contributes at most $1$, which gives
$M_\varphi(z)\le\mathrm{sat}_\varphi(\rho(z))$, and the latter is at most
$m$.

(iii) Taking $z_a=\frac12$ for every $a$ makes every $\alpha_\lambda$
vanish, hence every $s_C=0$ and every $\gamma_C=0$.
\end{proof}

The form in which Section~\ref{sec:trig} uses the lemma is the following.

\begin{corollary}\label{cor:cnfmax}
Let $S$ be a set of variables of $\varphi$ and $\sigma\in\{0,1\}^S$. The
supremum of $M_\varphi$ over the inputs whose $S$-coordinates are fixed to
$\sigma$ equals $m$ if the restriction $\varphi|_\sigma$ is satisfiable,
and is at most $m-1$ otherwise. In the first case it is attained at a
Boolean input.
\end{corollary}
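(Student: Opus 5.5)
The corollary follows from items (i) and (ii) of Lemma~\ref{lem:cnf}, applied to inputs whose $S$-coordinates are fixed. Write $Z_\sigma$ for the set of $z\in\R^n$ with $z_a=\sigma_a$ for every $a\in S$. The plan is to compare $\sup_{Z_\sigma}M_\varphi$ with $\max_b \mathrm{sat}_\varphi(b)$, where $b$ ranges over the Boolean assignments extending $\sigma$.

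\emph{Rounding preserves the constraint.} Take any $z\in Z_\sigma$. For $a\in S$ we have $z_a=\sigma_a\in\{0,1\}$, and $\rho$ fixes $0$ and $1$ (since $1>\tfrac12$ and $0\le\tfrac12$). Hence $\rho(z)$ is a Boolean assignment extending $\sigma$. Item (ii) then gives
\[
M_\varphi(z)\le\mathrm{sat}_\varphi(\rho(z))\le\max_{b\supseteq\sigma}\mathrm{sat}_\varphi(b).
\]
Every Boolean $b$ extending $\sigma$ lies in $Z_\sigma$, and item (i) gives $M_\varphi(b)=\mathrm{sat}_\varphi(b)$. So the supremum over $Z_\sigma$ equals this finite maximum and is attained at a Boolean point.

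\emph{The two cases.} Suppose $\varphi|_\sigma$ is satisfiable. Then some extension $b$ of $\sigma$ satisfies every clause, so $\mathrm{sat}_\varphi(b)=m$ and $M_\varphi(b)=m$. Combined with the bound $M_\varphi\le m$ from (ii), the supremum is $m$ and is attained at the Boolean input $b$. Now suppose $\varphi|_\sigma$ is unsatisfiable. Then every extension $b$ leaves at least one clause false, so $\mathrm{sat}_\varphi(b)\le m-1$. By the displayed inequality, $M_\varphi(z)\le m-1$ for all $z\in Z_\sigma$.

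There is no real obstacle here. The only point needing care is that fixing $S$-coordinates to Boolean values commutes with $\rho$. This relies on the threshold convention that $z_a\le\tfrac12$ rounds to $0$, which sends $0$ to $0$ and $1$ to $1$.
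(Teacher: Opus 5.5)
Your proof is correct and follows the paper's argument: rounding preserves the fixed Boolean coordinates, Lemma~\ref{lem:cnf}(i) gives the value $m$ at a satisfying Boolean extension of $\sigma$, and Lemma~\ref{lem:cnf}(ii) gives the bound $m-1$ in the unsatisfiable case (and $m$ in general). You also observe that the supremum always equals the maximum of $\mathrm{sat}_\varphi$ over Boolean extensions, and so is attained at a Boolean point in both cases; this is a harmless mild strengthening of what the paper states.
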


\begin{proof}
Rounding fixes Boolean coordinates, so if $z$ has its $S$-coordinates equal
to $\sigma$ then so does $\rho(z)$. If $\varphi|_\sigma$ is satisfiable,
extend $\sigma$ to a satisfying assignment $b$ and apply
Lemma~\ref{lem:cnf}(i) to get $M_\varphi(b)=m$. If it is not, then every
Boolean point agreeing with $\sigma$ leaves some clause unsatisfied, so
$\mathrm{sat}_\varphi(\rho(z))\le m-1$ for every admissible $z$, and
Lemma~\ref{lem:cnf}(ii) bounds $M_\varphi$ by $m-1$ there.
\end{proof}

The gadget the hardness proofs of Section~\ref{sec:notnew} actually use is
the following non-negative indicator, which vanishes identically exactly
when $\varphi$ is unsatisfiable.

\begin{definition}[Violation gadget]\label{def:gadget}
For a $3$-CNF formula $\varphi$ with $m$ clauses put
$\hat g_\varphi:=\ReLU\bigl(M_\varphi-(m-1)\bigr)$.
\end{definition}

\begin{corollary}\label{cor:gadget}
$\hat g_\varphi$ is a $\ReLU$-network with values in $[0,1]$, computable
from $\varphi$ in linear time, with $\hat g_\varphi\equiv0$ if $\varphi$ is
unsatisfiable, and attaining both the value $1$ and the value $0$ if
$\varphi$ is satisfiable.
\end{corollary}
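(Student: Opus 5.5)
The corollary follows by composing Lemma~\ref{lem:cnf} with one further $\ReLU$ node. I would first settle the network structure and the range, and then treat the two cases of $\varphi$ separately.

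\emph{Structure and range.} The output node of $M_\varphi$ already carries a $\ReLU$ whose argument is non-negative. So $\hat g_\varphi$ is obtained by appending one node in a new output layer, with weight $1$ from the old output and bias $-(m-1)$. This is a $\ReLU$-network built from $M_\varphi$ in constant additional time, hence in linear time overall by Lemma~\ref{lem:cnf}. By Lemma~\ref{lem:cnf}(ii) we have $0\le M_\varphi\le m$. Therefore $M_\varphi-(m-1)\le 1$, and applying $\ReLU$ gives $\hat g_\varphi\in[0,1]$ everywhere.

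\emph{Unsatisfiable case.} Apply Corollary~\ref{cor:cnfmax} with $S=\emptyset$, so that $\varphi|_\sigma=\varphi$. It gives $M_\varphi\le m-1$ on all of $\R^n$. Hence the argument of the final $\ReLU$ is non-positive everywhere, and $\hat g_\varphi\equiv0$.

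\emph{Satisfiable case.} Choose a satisfying assignment $b$. Lemma~\ref{lem:cnf}(i) gives $M_\varphi(b)=m$, so $\hat g_\varphi(b)=\ReLU(1)=1$. For the value $0$, use Lemma~\ref{lem:cnf}(iii): $M_\varphi(\tfrac12,\dots,\tfrac12)=0$, so $\hat g_\varphi$ there equals $\ReLU(-(m-1))=0$, since $m\ge1$.

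The only step needing care is this last one. It relies on the midpoint, not on a Boolean falsifying assignment, because a satisfiable $\varphi$ (for instance a tautology-like formula) need not have any Boolean point with fewer than $m$ satisfied clauses. Item (iii) of Lemma~\ref{lem:cnf} supplies exactly this point, and $m\ge1$ ensures the bias makes the pre-activation non-positive there.
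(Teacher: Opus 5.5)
Your proof is correct and follows the paper's argument. The range comes from Lemma~\ref{lem:cnf}(ii), the value $1$ from (i) at a satisfying assignment, the value $0$ from (iii) at the midpoint with $m\ge1$, and $\hat g_\varphi\equiv0$ in the unsatisfiable case from the bound $M_\varphi\le m-1$. The only cosmetic difference is that you obtain that bound through Corollary~\ref{cor:cnfmax} with $S=\emptyset$, whereas the paper reads it off Lemma~\ref{lem:cnf}(ii) directly, which is exactly what that corollary's proof does anyway.
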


\begin{proof}
Values in $[0,1]$ because $0\le M_\varphi\le m$ by
Lemma~\ref{lem:cnf}(ii). If $\varphi$ is unsatisfiable then
$\mathrm{sat}_\varphi(\rho(z))\le m-1$ for every $z$, so $M_\varphi\le m-1$
everywhere by (ii) and the $\ReLU$ clips to $0$. If $\varphi$ is
satisfiable, (i) applied to a satisfying assignment gives an input with
$M_\varphi=m$ and hence $\hat g_\varphi=1$, and (iii) gives an input with
$M_\varphi=0$ and hence $\hat g_\varphi=0$.
\end{proof}

\section{A schema, and two axes}\label{sec:schema}

\begin{definition}[Schema]\label{def:schema}
A \emph{parameterised verification problem} is given by a class of
\emph{perturbation parameters} $\mathcal P$, a quantifier prefix
$Q_1Q_2\in\{\exists,\forall\}^2$, and a condition $\Phi$ expressible by
linear constraints on the values of a network at a point. An instance
consists of an $F$-network $N$, a description of $\mathcal P$, an LP
specification $A$ for the inputs, and the data of $\Phi$; the question is
whether
\[
  Q_1\,p\in\mathcal P\quad Q_2\,x\in A:\qquad \Phi\bigl(N_p,x\bigr).
\]
\end{definition}

Three kinds of $\mathcal P$ occur. It may be \emph{trivial}, the network
being fixed; it may be an \emph{input-space} set, so that
$N_p(x)=N(x\oplus p)$ for some rational affine combination rule
$\oplus$; or it may be a \emph{parameter-space} set, so that $N_p$ is $N$
with different parameters. Table~\ref{tab:schema} on page~\pageref{tab:schema}
places every problem of this paper in the resulting grid.

The two axes are these.

\paragraph{Alternation.} The quantifier prefix determines the level of the
polynomial hierarchy, and Lemma~\ref{lem:uniform} below makes the
membership direction automatic for piecewise linear networks. All
previously studied problems have $\mathcal P$ trivial and a single
quantifier, hence live at the first level; $\MIN$ is the exception, and it
is at the second.

\paragraph{Arithmetic.} Whether $p$ enters the computation linearly or
multiplicatively determines whether the resulting constraint system is
linear or polynomial, and hence whether the problem lives in the
polynomial hierarchy or in the real hierarchy. An input-space parameter is
added to an input and enters exactly as an input does. A parameter-space
parameter multiplies an activation. Theorem~\ref{thm:wr} shows that this
distinction alone accounts for a jump from $\mathrm P$ to $\ER$-complete.

\begin{lemma}[Uniform membership]\label{lem:uniform}
Let $F$ be a finite set of semilinear activations with rational
coefficients. Then the graph
$\{(x,\bar y): \bar y \text{ is the vector of node values of } N \text{ on
} x\}$ of an $F$-network $N$ is defined by a quantifier-free formula of
real addition of size linear in the size of $N$. Consequently, if a
property of $N$ is expressible in the schema of
Definition~\ref{def:schema} with $k$ quantifier blocks and a trivial or
input-space parameter class, then deciding it lies in $\Sig k$ or
$\Pih k$ according to the leading quantifier.
\end{lemma}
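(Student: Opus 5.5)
The proof has two parts: first the definability of the graph, then the reduction of the schema to Sontag's theorem (Theorem~\ref{thm:sontag}).

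\emph{The graph is definable.} Since $F$ is finite, we fix once and for all, for each $\sigma\in F$, a quantifier-free formula $\psi_\sigma(u,v)$ of real addition with rational coefficients defining the graph $\{(u,\sigma(u))\}$. Such a formula exists because $\sigma$ is semilinear: its graph is a Boolean combination of half-planes in $\R^2$. Its size is a constant depending only on $F$.

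For node $i$ of layer $\ell$ we take a fresh variable $y_{\ell i}$ and write the conjunct
\[
  \psi_{\sigma_{\ell i}}\Bigl(\textstyle\sum_j c^{(\ell-1)}_{ji}y_{(\ell-1)j}+b_{\ell i},\;y_{\ell i}\Bigr),
\]
with $y_{0j}:=x_j$. The pre-activation is a linear term, so substituting it for $u$ keeps us inside real addition. Each occurrence of $u$ in $\psi_\sigma$ copies the whole term, but that is at most a constant number of copies, and the term has size proportional to the in-degree of the node. Summing over nodes, the conjunction $\Gamma_N(x,\bar y)$ has size $O(|N|)$ with binary-encoded coefficients. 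By induction on layers, $\Gamma_N(x,\bar y)$ holds iff $\bar y$ is the vector of node values on $x$, because each activation is a function and so the value of each node is forced by the values of the previous layer.

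\emph{From the schema to Sontag's theorem.} Take a problem in the schema with $k$ quantifier blocks, say $Q_1p\,Q_2x\,\Phi(N_p,x)$.
\begin{itemize}
\item If $\mathcal P$ is trivial, drop the $p$-block.
\item If $\mathcal P$ is input-space, $N_p(x)=N(x\oplus p)$ with $\oplus$ rational affine. We feed the term $x\oplus p$ as the input variables into $\Gamma_N$; this is again linear.
\end{itemize}
The remaining pieces are also linear. The descriptions of $\mathcal P$ and $A$ are linear constraints, and we relativise them in the usual way: $\exists p\,(p\in\mathcal P\wedge\cdots)$ and $\forall x\,(x\in A\to\cdots)$. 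The condition $\Phi$ consists of linear constraints on node values.

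The node variables $\bar y$ need their own quantifier. Since they are functionally determined, we have
\[
  \Phi(N_p,x)\iff\exists\bar y\,(\Gamma\wedge\Phi)\iff\forall\bar y\,(\Gamma\to\Phi).
\]
We choose whichever form matches the innermost quantifier $Q_k$, so that $\bar y$ merges into the last block and the number of alternations stays $k$. The resulting sentence has polynomial size, is built in logspace, and Theorem~\ref{thm:sontag} places it in $\Sig k$ or $\Pih k$ according to $Q_1$.

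The only delicate step is this absorption of $\bar y$ without adding a block. It works precisely because the graph formula defines a total function, which lets us switch freely between the existential and universal form. It also rests on the size accounting for substituting linear terms into the fixed formulas $\psi_\sigma$, which uses the finiteness of $F$.

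Everything else is routine. The argument also shows why parameter-space $\mathcal P$ is excluded: there a weight $c$ becomes a variable, and the pre-activation term then contains the product $c\cdot y$, which is not expressible in real addition.
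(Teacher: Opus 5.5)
Your proof is correct and is essentially the paper's argument: you take a constant-size quantifier-free formula per activation with the pre-activation linear form substituted in, absorb the functionally determined node variables into an existing quantifier block, and appeal to Theorem~\ref{thm:sontag}. The differences are cosmetic. You place all node variables in the innermost block using $\exists\bar y(\Gamma\wedge\Phi)\iff\forall\bar y(\Gamma\to\Phi)$, whereas the paper attaches them to the block whose variables determine them, and your size count for nodes of large in-degree is slightly more careful than the paper's ``constant size per node''.
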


\begin{proof}
Each activation $\sigma\in F$ has semilinear graph with rational
coefficients, so $y=\sigma(z)$ is equivalent to a Boolean combination
$\bigvee_{r}\bigl(y=a_rz+b_r\wedge \alpha_r\le z\le\beta_r\bigr)$ of
constant size, the number of pieces depending only on $\sigma$ and not on
$N$. The value $z$ entering a node is a rational linear form in the values
of the previous layer, so each node contributes a constant-size
quantifier-free formula, and $N$ contributes their conjunction, of size
linear in $N$.

For the second statement, write the property out. Writing $G$ for the
graph formula just constructed, a block $\exists x\in A$ becomes
\[
  \exists x\,\exists\bar y\ \bigl(G\wedge A(x)\wedge\cdots\bigr),
\]
and a block $\forall x\in A$ becomes
\[
  \forall x\,\forall\bar y\ \bigl(G\wedge A(x)\rightarrow\cdots\bigr);
\]
in either case the node variables join the block they belong to without
creating an alternation, because they are functionally determined by that
block's variables. An input-space parameter $p$ is a further variable of
its own block, and $N_p(x)=N(x\oplus p)$ is again linear. The result is a
$k$-alternation sentence of real addition of size linear in the instance,
and Theorem~\ref{thm:sontag} applies.
\end{proof}

\begin{remark}
Lemma~\ref{lem:uniform} subsumes the certificate arguments used throughout
the verification literature - ``guess the activation pattern and the
violated constraint, then solve a linear program'' - which are its
$k=1$ case. It is worth stating because it makes all membership results in
this paper immediate, and because it identifies exactly what is needed for
them: semilinearity of the activations, and a parameter that enters
linearly. Both hypotheses fail in Section~\ref{sec:weights}, and the
complexity changes accordingly.
\end{remark}

\section{Three properties that reduce to classified ones}\label{sec:notnew}

We begin with three properties that a practitioner would reach for first.
Each of them turns out to be linear-time interreducible with a problem
whose complexity is already known, so that none of them is new as a
\emph{computational} problem, however different they are as
specifications. We record this because the consequence is useful in both
directions: their complexity is classified by activation function already,
and a tool that decides equivalence or interval verification decides them
once a front-end has been written.

\subsection{Non-interference}

The oldest formal security property there is \cite{Goguen}, transplanted
from language-based security \cite{Sabelfeld}: the output must not depend
on the secret part of the input.

\begin{problem}[$\NI(F)$]\label{prob:NI}
Given an $F$-network $N$ on $n$ inputs and a set
$P\subseteq\{1,\dots,n\}$ of \emph{public} coordinates, the remaining
coordinates being \emph{secret}, decide whether
\[
  \forall x,x'\in\R^n:\quad x_P=x'_P\ \Longrightarrow\ N(x)=N(x').
\]
\end{problem}

\begin{proposition}\label{prop:NI}
$\NI(F)$ reduces to $\NE(F)$ in linear time for every $F$, so every upper
bound for $\NE(F)$ transfers to $\NI(F)$. Conversely $\NI(\ReLU)$ is
co-NP-hard already for $P=\emptyset$. Hence $\NI(\ReLU)$ is
co-NP-complete.
\end{proposition}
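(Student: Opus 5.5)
For the reduction $\NI(F)\le\NE(F)$, I will take $N$ and build two networks $N_1,N_2$ on $2n$ inputs and ask whether they are equivalent. Let the input be $(x,x')\in\R^{2n}$. The network $N_1$ feeds $x$ into a copy of $N$ and ignores $x'$. The network $N_2$ feeds a copy of $N$ with the mixed vector $u$, where $u_i=x_i$ for $i\in P$ and $u_i=x'_i$ for $i\notin P$. Both are obtained by rewiring the first weight layer of $N$, so no new nodes are needed. Each has size linear in $N$, and both are $F$-networks because only the affine part changes.

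Next I check correctness. Suppose $N_1\equiv N_2$, and take $x,x'$ with $x_P=x'_P$. Then $N(x)=N_2(x,x')$ for the mixed input, which is exactly $x'$. So $N(x)=N(x')$. Conversely, suppose non-interference holds. The mixed vector $u$ agrees with $x$ on $P$, so $N(x)=N(u)$, which gives $N_1\equiv N_2$.

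One wrinkle needs care. If the convention for $\NE$ requires equal input dimension, we already have it. If it forbids inputs that have no outgoing edges, I would attach each ignored input with weight $0$. Since $F$ may have no identity activation, rewiring the first layer is the safe choice: it avoids inserting any pass-through layer. The upper bound then transfers. In particular, $\NE(\ReLU)$ is in co-NP by \cite{WurmJ}, or directly by Lemma~\ref{lem:uniform} with a single $\forall$ block.

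For hardness with $P=\emptyset$, non-interference says exactly that $N$ is constant. I will reduce co-3SAT. Given $\varphi$, take $N:=\hat g_\varphi$ from Definition~\ref{def:gadget}. By Corollary~\ref{cor:gadget}, $\hat g_\varphi$ is identically $0$ when $\varphi$ is unsatisfiable, so it is constant. When $\varphi$ is satisfiable it attains both $0$ and $1$, so it is not constant. Hence $\varphi$ is unsatisfiable iff $(\hat g_\varphi,\emptyset)\in\NI(\ReLU)$, and the reduction takes linear time. This gives co-NP-hardness, and together with the membership above, co-NP-completeness.

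I expect no real obstacle. The only delicate step is the formatting of the reduction to $\NE$: the two networks must be $F$-networks of the required shape without using an identity activation. This is handled by folding the coordinate selection into the first affine layer.
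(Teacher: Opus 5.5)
Your proposal is correct and follows the paper's proof. The reduction to $\NE$ reroutes the first weight layer in the same way (the paper uses the leaner input $(x_P,x_S,x'_S)$ where you use $(x,x')$ with $x'_P$ ignored, which is immaterial), hardness for $P=\emptyset$ comes from $\hat g_\varphi$ and Corollary~\ref{cor:gadget}, and membership in co-NP comes from Lemma~\ref{lem:uniform} with $k=1$.
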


\begin{proof}
Let $S$ be the secret coordinates. On the enlarged input
$(x_P,x_S,x'_S)\in\R^{|P|}\times\R^{|S|}\times\R^{|S|}$ define
$N_1(x_P,x_S,x'_S):=N(x_P,x_S)$ and $N_2(x_P,x_S,x'_S):=N(x_P,x'_S)$.
Each is a copy of $N$ with some input edges rerouted and some inputs
ignored, hence an $F$-network of the same size; and $\NI(N,P)$ holds if
and only if $N_1\equiv N_2$. The construction is computable in linear
time.

For hardness take $P=\emptyset$, so that the premise $x_P=x'_P$ is vacuous
and non-interference says exactly that $N$ is a constant function. Given a
$3$-CNF formula $\varphi$, take $N:=\hat g_\varphi$ of
Definition~\ref{def:gadget}. By Corollary~\ref{cor:gadget} this network is
constant if $\varphi$ is unsatisfiable, and takes both the value $0$ and
the value $1$ if $\varphi$ is satisfiable. So $\NI(N,\emptyset)$ holds if
and only if $\varphi$ is unsatisfiable, and the construction is linear
time. Membership in co-NP for piecewise linear $F$ is
Lemma~\ref{lem:uniform} with $k=1$.
\end{proof}

\begin{corollary}[\cite{WurmJ}]\label{cor:NE}
$\NE(\ReLU)$ is co-NP-complete.
\end{corollary}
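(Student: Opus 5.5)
The corollary has two directions, co-NP membership and co-NP-hardness, and both come cheaply from results already in place. For membership I will apply Lemma~\ref{lem:uniform} with $k=1$. Network equivalence of two $\ReLU$-networks $N_1,N_2$ on a common input space is the single-block universal sentence
\[
  \forall x\,\forall\bar y_1\,\forall\bar y_2\ \bigl(G_1\wedge G_2\rightarrow \mathrm{out}(\bar y_1)=\mathrm{out}(\bar y_2)\bigr),
\]
where $G_1,G_2$ are the linear-size graph formulas of the two networks. Its matrix is quantifier-free real addition, and the parameter class is trivial. Theorem~\ref{thm:sontag} therefore places it in $\Pih1=\text{co-NP}$. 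Vector-valued outputs need no special care: each coordinate equality is a linear equation, and the conjunction is still quantifier-free.

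For hardness I will reduce from $\NI(\ReLU)$ with $P=\emptyset$, which Proposition~\ref{prop:NI} shows is co-NP-hard. The direct route is to take $N_1:=\hat g_\varphi$, and to take $N_2$ as a copy of $\hat g_\varphi$ with its input edges rerouted to fresh inputs, as in the proof of Proposition~\ref{prop:NI}. The composite reduction is therefore linear time.

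The one point to check is that $N_1\equiv N_2$ if and only if $\hat g_\varphi$ is constant. If $\hat g_\varphi$ is constant, then $N_1(x,x')=\hat g_\varphi(x)$ and $N_2(x,x')=\hat g_\varphi(x')$ are the same constant, so the networks are equivalent. If $\hat g_\varphi$ is not constant, choose $x,x'$ with $\hat g_\varphi(x)\ne\hat g_\varphi(x')$; at that input the two networks differ. By Corollary~\ref{cor:gadget}, $\hat g_\varphi$ is constant exactly when $\varphi$ is unsatisfiable, so $\varphi$ is unsatisfiable if and only if $N_1\equiv N_2$, which is a co-NP-hardness reduction from $\overline{\mathrm{SAT}}$.

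There is an even simpler alternative: take $N_2$ to be the constant-zero network on the same $n$ inputs, a single $\ReLU$ node with all weights and bias $0$. Since $\hat g_\varphi\equiv0$ exactly when $\varphi$ is unsatisfiable, this also works.

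I do not expect a genuine obstacle here. The only care needed is that both networks are legitimate $\ReLU$-networks with matching input and output dimensions, since $\NE$ compares functions on a common domain. The constant-zero network satisfies this trivially.
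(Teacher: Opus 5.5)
Your proposal is correct and matches the paper. Membership is Lemma~\ref{lem:uniform} with $k=1$, and your ``even simpler alternative'' is exactly the paper's hardness argument: compare $\hat g_\varphi$ with the constant-zero network and invoke Corollary~\ref{cor:gadget}. Your primary route, which composes the two reductions of Proposition~\ref{prop:NI}, is also sound and needs only that $\hat g_\varphi$ is constant precisely on negative instances, not the value of that constant. It reaches the same gadget by a slightly longer path.
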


\begin{proof}
Membership is Lemma~\ref{lem:uniform} with $k=1$. For hardness compare
$\hat g_\varphi$ with the network computing the constant $0$: by
Corollary~\ref{cor:gadget} the two are equivalent if and only if $\varphi$
is unsatisfiable.
\end{proof}

The statement is not ours; network equivalence was classified in
\cite{WurmJ}, and Table~\ref{tab:schema} records it there. We give the
argument because it costs two lines once Corollary~\ref{cor:gadget} is
available, and because keeping it inside the paper is what lets
Proposition~\ref{prop:NI} be read without consulting \cite{WurmJ}.

\begin{remark}\label{rem:NIgeneral}
The hardness argument uses only an $F$-network that is constant precisely
when the source instance is negative. For any activation set $F$ admitting
such a gadget the same two lines transfer the lower bounds for
$\NNReach(F)$ to the complement of $\NI(F)$; in particular the complement
of $\NI(F)$ is then $\ER$-hard for every $F$ for which reachability is
$\ER$-hard \cite{Wurm23,Wurm24}, so that $\NI(F)$ is $\FER$-hard. We state
the $\ReLU$ case, which is the one the rest of the paper uses.
\end{remark}

\paragraph{Practical reading.} Two things follow. First, no new
\emph{solver} is required: a verifier that decides equivalence of two
$\ReLU$ networks decides non-interference once a front-end has duplicated
the network and tied the public inputs. That front-end is real work, and
the two specifications look nothing alike on the page; the reduction says
that the hard part is shared, not that the problems are the same. Second, and less comfortably, exact
non-interference is the wrong specification for a trained model. It is an
all-or-nothing condition, and a network trained on data in which the secret
correlates with anything at all will fail it, typically by a margin far too
small to matter. The specification practitioners actually want is metric
- the output should not depend \emph{appreciably} on the secret - and
that is not $\NE$; it is Problem~\ref{prob:IND} below, which sits a level
higher. The gap between the property that is easy to check and the property
one wants is, here, exactly one quantifier alternation. The alternative is
to leave the decision setting and measure leakage instead, as approximate
counting has been made to do for neural networks \cite{Baluta}; that moves the problem into a
counting class rather than up the hierarchy.

\subsection{Monotonicity and counterfactual fairness}

\begin{problem}[$\MONO(F)$, $\CF(F)$]\label{prob:mono}
$\MONO(F)$: given $N$, an LP specification $A$ and a coordinate $i$,
decide whether $N(x)\le N(x')$ componentwise for all $x,x'\in A$ with
$x'-x\in\R_{\ge0}e_i$. $\CF(F)$, after \cite{CFair}: given $N$, $A$ and a
rational affine
involution $\iota$ acting on a protected coordinate, decide whether
$\argmax_j N(x)_j=\argmax_j N(\iota x)_j$ for all $x\in A$ with
$\iota x\in A$.
\end{problem}

\begin{proposition}\label{prop:mono}
For semilinear $F$ with rational coefficients, $\MONO(F)$ and $\CF(F)$ are
in co-NP. Both are co-NP-hard for $F=\{\ReLU\}$, hence co-NP-complete.
\end{proposition}

\begin{proof}
\emph{Membership} is Lemma~\ref{lem:uniform} with $k=1$: both conditions
are single universal blocks over two copies of the network whose inputs are
tied by linear equations. (Note that Lemma~\ref{lem:uniform} is
insensitive to whether the conclusion uses strict or non-strict
inequalities, since the language of real addition contains both. This
matters because monotonicity is inherently non-strict, whereas $\VIP$ is
stated in \cite{WurmJ} with an open output polyhedron.)

\emph{Hardness of $\MONO(\ReLU)$.} Let $\varphi$ be a $3$-CNF formula and
$\hat g:=\hat g_\varphi$ its violation gadget, with values in $[0,1]$.
Build the $\ReLU$-network $M'$ on inputs $(z,t)\in\R^n\times\R$ with the
single output
\[
  M'(z,t)\ :=\ \ReLU\bigl(\hat g(z)-t\bigr),
\]
which is a legal $\ReLU$-network because $\hat g(z)$ is the value of a node
and $t$ an input, so the displayed node applies $\ReLU$ to a linear form in
earlier values. Take $A':=\{0\le t\le1\}$ and the coordinate $t$.

Fix $z$ and write $c:=\hat g(z)\in[0,1]$. If $c=0$ then
$M'(z,t)=\ReLU(-t)=0$ for every $t\in[0,1]$, so $M'(z,\cdot)$ is constant
and in particular non-decreasing. If $c>0$ then $M'(z,0)=c>0$ while
$M'(z,c)=\ReLU(0)=0$, and $(z,c)-(z,0)\in\R_{\ge0}e_t$ with $(z,c)\in A'$,
so monotonicity in $t$ fails. Hence $M'$ is monotone in $t$ on $A'$ if and
only if $\hat g\equiv0$, which by Corollary~\ref{cor:gadget} holds if and
only if $\varphi$ is unsatisfiable.

\emph{Hardness of $\CF(\ReLU)$.} Take the same $\hat g$, the input
specification $A':=\{0\le t\le1\}$, which is invariant under the rational
affine involution $\iota:(z,t)\mapsto(z,1-t)$, and the $\ReLU$-network
$M''$ with the two outputs
\[
  o_1(z,t):=\ReLU(1)=1,\qquad
  o_2(z,t):=\ReLU\bigl(2-2u(z,t)\bigr),\quad
  u(z,t):=\ReLU\bigl(\hat g(z)+t-1\bigr),
\]
where $o_1$ is a node with no incoming edges and bias $1$. For
$\hat g(z),t\in[0,1]$ we have $u\in[0,1]$ and hence $o_2=2-2u\in[0,2]$.

If $\varphi$ is unsatisfiable then $\hat g\equiv0$, so
$u(z,t)=\ReLU(t-1)=0$ on $A'$ and $o_2\equiv2>1=o_1$; the classification is
$2$ at every point of $A'$ and counterfactual fairness holds. If $\varphi$
is satisfiable, take $z^*$ with $\hat g(z^*)=1$, which exists by
Corollary~\ref{cor:gadget}. At $(z^*,1)$ we get $u=\ReLU(1)=1$ and
$o_2=0<1=o_1$, so the classification is $1$; at $\iota(z^*,1)=(z^*,0)$ we
get $u=\ReLU(0)=0$ and $o_2=2>1=o_1$, so the classification is $2$, and
fairness fails. Both constructions are computable in linear time.
\end{proof}

\begin{remark}\label{rem:monogeneral}
Neither reduction uses anything about $\ReLU$ beyond the gadget of
Corollary~\ref{cor:gadget} - a non-negative $F$-network, bounded by $1$,
vanishing identically exactly when the source instance is negative - and
the ability to apply an activation to a linear form in earlier node values.
For any $F$ admitting such a gadget the same constructions reduce the
corresponding reachability problem to $\MONO(F)$ and $\CF(F)$; in
particular both are $\FER$-hard for every $F$ for which reachability is
$\ER$-hard \cite{Wurm23,Wurm24}, which includes every non-linear polynomial
and the usual sigmoidal activations.

Note also that the earlier version of this argument passed through a
Heaviside threshold and produced a network with a negative output. That is
not available over $F=\{\ReLU\}$, since every node of a $\ReLU$-network
applies $\ReLU$ and all outputs are therefore non-negative; the
constructions above make the \emph{argument} of the final $\ReLU$ decrease
instead of the value, which is what allows the Heaviside hypothesis to be
dropped.
\end{remark}

\begin{remark}[Relation to rule checking]\label{rem:rules}
Monotonicity has been studied in this setting before, and for $\ReLU$
networks the co-NP-completeness in Proposition~\ref{prop:mono} is not new.
Rules extracted from a network are classified in \cite{WurmRules} for four
rule languages - propositional rules, oblique rules, $M$-of-$N$ rules and
monotonicity rules - under three questions: whether a rule holds of the
network, whether a set of rules is consistent, and whether a set of rules
is exhaustive. Rule verification is co-NP-complete for $\ReLU$ networks in
each of the four languages, monotonicity included, and checking a
monotonicity rule is shown there to be linear-time reducible to checking an
oblique rule. The monotonicity rules of \cite{WurmRules} are more general
than Problem~\ref{prob:mono}: they require that $Ax\le Ay$ imply that the
output does not decrease, for a matrix $A$, so the order need not be
coordinatewise, and Problem~\ref{prob:mono} is the case in which $A$
selects a single coordinate.

Proposition~\ref{prop:mono} adds two things. The hardness reduction is
stated for $\ReLU$ alone and constructs the network explicitly from the
formula, rather than assuming a threshold activation in order to build a
Boolean violation indicator; and by Remark~\ref{rem:monogeneral} the same
two gadgets carry the classification off $\ReLU$, giving $\FER$-hardness
wherever reachability is $\ER$-hard. The
framings are complementary in a further way. In \cite{WurmRules} the rule
language is the parameter and the interesting phenomena are differences
between languages - consistency and exhaustiveness of monotonicity rules
are in $\mathrm P$ there, while for the other three languages they are
co-NP-hard - whereas here monotonicity is one property among several and
the parameter of interest is the quantifier prefix. That the $\ReLU$ cases
agree is a consistency check on both readings.
\end{remark}

\begin{remark}
The two gadgets make the reduction concrete: a property whose prefix is a
single universal block over inputs to a fixed network is decided by
appending a constant number of threshold nodes and asking the interval
question. That is a statement about cost and not about meaning.
Monotonicity, counterfactual fairness and interval verification are three
different specifications, written over different data and checked by
different front-ends; only their difficulty coincides. Everything genuinely
new in the remainder of this paper either alternates quantifiers or
perturbs the network itself.
\end{remark}

\paragraph{Practical reading.} Monotonicity constraints are imposed by
regulation in credit scoring, insurance pricing and parts of clinical
decision support: a risk score must not fall when a risk factor rises.
Proposition~\ref{prop:mono} says that checking monotonicity of an
arbitrary trained network costs exactly what robustness certification
costs, so the same solvers apply; and, by Remark~\ref{rem:monogeneral},
that for sigmoidal networks the
problem is $\FER$-hard, which is a precise explanation of why monotonicity
checkers in the literature \cite{CertMono} restrict themselves to piecewise
linear models or else abandon completeness. The design response -
architectures that are monotone by construction, or counterexample-guided
learning that repairs violations during training \cite{CEGMono} - is, in
this light, the reasonable one: it replaces a co-NP-complete check by a
syntactic invariant or by an incremental search. The same reading applies
to fairness: individual fairness in the sense of Dwork et al.\ \cite{FTA}
is a Lipschitz condition and therefore an instance of the global robustness
problem already classified in \cite{WurmJ}, which is why its verification
\cite{VerifFair} reuses robustness machinery.

\section{Alternating input-space properties}\label{sec:trig}

We now come to properties whose quantifier prefix alternates. These are
genuinely outside the previously classified family, and the first of them
is the central security question about a model of unknown provenance.

\subsection{Backdoors}

A backdoored network behaves normally on ordinary inputs but misclassifies
any input carrying a \emph{trigger}: in the standard construction
\cite{BadNets} a small patch of fixed pixel values written over a
fixed
region of the image, which forces a target class whatever the rest of the
image is. Deciding whether a network is backdoored is therefore an
$\exists\forall$ question: \emph{does there exist a patch such that all
inputs carrying it are classified as $j$?}

\begin{problem}[$\TRIG(F)$]\label{prob:trig}
Given an $F$-network $N$, a \emph{patch} $I\subseteq\{1,\dots,n\}$, a
finite \emph{alphabet} $G\subseteq\Q$, an LP specification $A$ and a
target coordinate $j$, decide whether
\[
  \exists\,\tau\in G^{I}\ \ \forall\,x\in A:\qquad
  \argmax_{j'}N\bigl(x[I\mapsto\tau]\bigr)_{j'}=j,
\]
where $x[I\mapsto\tau]$ denotes $x$ with its $I$-coordinates overwritten
by $\tau$.
\end{problem}

\begin{theorem}\label{thm:trig}
$\TRIG(\ReLU)$ is $\Sig2$-complete.
\end{theorem}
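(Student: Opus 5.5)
The plan is to prove membership from Lemma~\ref{lem:uniform} with $\Sigma_2$ bookkeeping for the finite alphabet, and hardness by a reduction from the canonical $\Sig2$-complete problem: given a $3$-CNF formula $\varphi(y,z)$, decide whether some assignment to $y$ makes $\varphi$ unsatisfiable. Corollary~\ref{cor:cnfmax} does all the real work.

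\emph{Membership.} The outer quantifier ranges over $G^{I}$, a finite set whose elements have polynomial-size descriptions, so an $\mathrm{NP}$ machine can guess $\tau$ coordinate by coordinate. Alternatively, $\tau$ can be encoded as real variables $\tau_a$ with the quantifier-free constraint $\bigvee_{g\in G}\tau_a=g$; then no guessing machinery is needed and Sontag's theorem applies directly. For fixed $\tau$, the inner condition is a universal statement over $x\in A$ about the network evaluated at $x[I\mapsto\tau]$, which is an input-space substitution and hence linear. The classification condition with the paper's tie-breaking rule is itself a conjunction of linear constraints on output values: $N_j>N_{j'}$ for $j'<j$ and $N_j\ge N_{j'}$ for $j'>j$. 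So the whole property is a $\exists\forall$ sentence of real addition of size linear in the instance, and Lemma~\ref{lem:uniform} with $k=2$ places $\TRIG(F)$ in $\Sig2$ for every semilinear $F$, in particular for $\ReLU$.

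\emph{Hardness.} Let $\varphi$ have variables $y_1,\dots,y_r,z_1,\dots,z_s$ and $m\ge1$ clauses. Build the following instance.
\begin{itemize}
\item Take $M_\varphi$ from Lemma~\ref{lem:cnf} on inputs $(y,z)$.
\item Add a second output node $o_1$ with no incoming edges and bias $m-\tfrac12$, so $o_1\equiv m-\tfrac12$ after $\ReLU$. Let $o_2:=M_\varphi$, copied through one more identity-acting $\ReLU$ layer if needed so both outputs sit in the last layer.
\item Take the patch $I$ to be the $y$-coordinates, the alphabet $G=\{0,1\}$, the specification $A=\R^{r+s}$ (or the unit box), and the target $j=1$.
\end{itemize}
For $\tau\in\{0,1\}^{r}$, the overwritten input $x[I\mapsto\tau]$ has its $y$-coordinates fixed to $\tau$, and $z$ ranges over all reals. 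By Corollary~\ref{cor:cnfmax}, the supremum of $o_2$ over such inputs is $m$, attained at a Boolean point, if $\varphi|_\tau$ is satisfiable, and at most $m-1$ otherwise.

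In the first case there is an $x$ with $o_2=m>o_1$, so the classification is $2$. In the second case $o_1=m-\tfrac12>o_2$ everywhere, so the classification is $1$ for every $x$. Hence the trigger instance is positive if and only if $\exists\tau\,\forall z\,\neg\varphi(\tau,z)$ holds. The construction is linear time.

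\emph{Main obstacle.} The delicate point is that the inner universal quantifier ranges over real $z$ rather than Boolean $z$. An adversary choosing non-Boolean $z$ could in principle push $o_2$ above $m-\tfrac12$ without a satisfying assignment existing. Item (ii) of Lemma~\ref{lem:cnf}, packaged in Corollary~\ref{cor:cnfmax}, rules this out: every real $z$ is dominated by its rounding, and the rounding preserves the Boolean patch. The margin $\tfrac12$ then absorbs the strict-versus-non-strict issue created by the tie-breaking rule.

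The quantised alphabet is used only on the outer side. If $\tau$ were allowed to be real, setting $\tau_a=\tfrac12$ would kill literals as in item (iii) of Lemma~\ref{lem:cnf}, and the equivalence above would break. This is why I would keep $G=\{0,1\}$ explicit in the reduction.
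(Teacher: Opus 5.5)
Your proof is correct and is essentially the paper's. Membership goes by guessing $\tau\in G^I$ and checking the remaining universal condition in co-NP; your alternative of encoding $\tau\in G^I$ as a disjunction and invoking Lemma~\ref{lem:uniform} with $k=2$ is equally valid. Hardness uses the same reduction from $\exists\bar u\,\forall\bar v$ over a $3$-DNF, with $M_\varphi$, a constant output $m-\tfrac12$, the $\bar u$-coordinates as the patch with $G=\{0,1\}$, and Corollary~\ref{cor:cnfmax}. The only difference is cosmetic: you place the constant output first and target $j=1$, whereas the paper places it second and targets $j=2$, and the tie-breaking works out the same way in both orderings.
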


\begin{proof}
\emph{Membership.} A trigger is an element of the finite grid $G^I$ and has
polynomial bit size, so it may be guessed. What remains is an instance of
$\VIP(\ReLU)$, which is in co-NP by Lemma~\ref{lem:uniform}. Hence
$\TRIG(\ReLU)\in\mathrm{NP}^{\mathrm{co\text-NP}}=\Sig2$.

\emph{Hardness.} We reduce from
$\exists\bar u\,\forall\bar v\,\psi(\bar u,\bar v)$ with $\psi$ in $3$-DNF,
which is $\Sig2$-complete. Let $\varphi:=\neg\psi$, a $3$-CNF over the same
variables with $m$ clauses, so that
\[
  \exists\bar u\forall\bar v\,\psi(\bar u,\bar v)
  \iff \exists\bar u:\ \varphi(\bar u,\cdot)\ \text{is unsatisfiable}.
  \tag{1}
\]

Let $M:=M_\varphi$ be the $\ReLU$-network of Lemma~\ref{lem:cnf}, which
we use through Corollary~\ref{cor:cnfmax}.

Now build the $\TRIG$ instance. Let $N$ be $M$ with two output
nodes, computing $c_1:=M$ and $c_2:=m-\frac12$, the latter by a node with
no incoming edges and bias $m-\frac12$; take $j:=2$, the index of $c_2$,
so that, with ties broken towards the smaller index, the target class is
achieved exactly when $M<m-\frac12$. Let the
patch $I$ consist of the input coordinates $z_a$ belonging to the
$\bar u$-variables, let $G:=\{0,1\}$, and let $A:=\R^n$.

If $\bar u$ is a Boolean assignment making $\varphi(\bar u,\cdot)$
unsatisfiable, take $\tau\in\{0,1\}^I$ to be $\bar u$. By
Corollary~\ref{cor:cnfmax}, applied with $S=I$ and $\sigma=\bar u$, we have
$M\le m-1<m-\frac12$ for every $x$, so the target class is achieved for
every $x$ and $\tau$ witnesses the $\TRIG$ instance.

Conversely let $\tau\in\{0,1\}^I$ witness the $\TRIG$ instance, and let
$\bar u$ be the Boolean assignment it encodes. If $\varphi(\bar u,\cdot)$
were satisfiable, then by Corollary~\ref{cor:cnfmax} there is an input $x$
carrying the patch $\tau$ with $M=m>m-\frac12$, so $c_1>c_2$ and the target
class fails at that $x$, contradicting the choice of $\tau$. So
$\varphi(\bar u,\cdot)$ is unsatisfiable, and by~(1) the $\Sig2$ instance
is positive.

The construction is computable in linear time, so $\TRIG(\ReLU)$ is
$\Sig2$-hard.
\end{proof}

\subsection{Why the quantisation matters}\label{sec:relax}

The restriction of the trigger to a finite alphabet is not a technical
convenience. It is exactly what makes the reduction sound, and the reason
is worth spelling out, because it has a practical counterpart.

Suppose the trigger were allowed to range over $\R^I$. Then the adversary
- the $\exists$ player - acquires a move with no Boolean counterpart.
Setting a patch coordinate $z_a:=\frac12$ makes both literal values of the
construction in the proof of Lemma~\ref{lem:cnf} vanish,
\[
  \alpha_a=\alpha_{\neg a}=0 :
\]
the literal $a$ and its negation are \emph{both} false. This is the content
of Lemma~\ref{lem:cnf}(iii), and it is precisely the failure of item (ii)
to be tight. Every clause
containing only $\bar u$-literals then has clause value $0$, whatever the
$\bar v$-part of the input does. In the game-theoretic reading, the
$\exists$ player is permitted to abstain on a variable rather than commit
to a truth value, and abstention is never worse for him than either
commitment, because clause values are monotone in the literal values.

\begin{proposition}\label{prop:relax}
The reduction of Theorem~\ref{thm:trig} is unsound for the relaxed problem:
there is a formula for which $\exists\bar u\forall\bar v\,\psi$ is false
while the relaxed $\TRIG$ instance the reduction produces is positive.
Moreover $\TRIG(\ReLU)$ with $G$ replaced by $\R$ is in $\Sig2$.
\end{proposition}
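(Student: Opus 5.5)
The proposition has two parts, and I would prove them separately.

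\emph{Unsoundness.} I would give an explicit small formula where abstention helps the $\exists$ player. Take one $\bar u$-variable $u$ and one $\bar v$-variable $v$, and let
\[
  \psi:=(u\wedge v)\vee(\neg u\wedge v)\vee(\neg v\wedge\neg v\wedge\neg v).
\]
This is equivalent to $v\vee\neg v$ and hence always true, so it cannot serve as a counterexample. I want the opposite behaviour: $\exists u\forall v\,\psi$ false, yet $u=\tfrac12$ kills enough clauses of $\varphi=\neg\psi$ that $M<m-\tfrac12$ everywhere.

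The simplest choice is $\psi:=(u\wedge\neg u)$, padded to width three as $(u\wedge\neg u\wedge u)$, with $v$ a dummy variable. Then $\psi$ is unsatisfiable, so $\exists u\forall v\,\psi$ is false. Its negation is the $3$-CNF $\varphi=(\neg u\vee u\vee\neg u)$, with $m=1$. If degenerate clauses are disallowed, I would instead use two clauses, $\varphi=(u\vee v_1\vee v_1)\wedge(\neg u\vee v_1\vee v_1)$ with the corresponding DNF $\psi$; the same computation applies.

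For the relaxed trigger $\tau=\tfrac12$ on the $u$-coordinate, Lemma~\ref{lem:cnf}(iii) gives $\alpha_u=\alpha_{\neg u}=0$. In the one-clause version this makes $s_C=0$, so $M\equiv0<m-\tfrac12$ for every $x$. The target class $j=2$ is therefore achieved everywhere, and the relaxed instance is positive. I would check that the reduction of Theorem~\ref{thm:trig}, applied verbatim with $G$ replaced by $\R$, produces exactly this network, and record the computation of $\alpha$ at $\tfrac12$.

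\emph{Membership.} Here I would invoke Lemma~\ref{lem:uniform} directly rather than guessing a trigger. The relaxed problem is the sentence
\[
  \exists\tau\in\R^{I}\ \forall x\ \bigl(A(x)\to \mathrm{Class}_j(N(x[I\mapsto\tau]))\bigr).
\]
Overwriting $I$-coordinates by $\tau$ is a linear substitution, so $\tau$ is an input-space parameter entering linearly, which is the hypothesis of the lemma. The argmax condition with ties broken toward the smaller index is a conjunction of linear inequalities on output node values: strict inequality $N_j>N_{j'}$ for $j'<j$ and non-strict $N_j\ge N_{j'}$ for $j'>j$. This gives a two-block $\exists\forall$ sentence of real addition of linear size, and Theorem~\ref{thm:sontag} with $k=2$ places it in $\Sig2$.

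The only subtle point is that guessing $\tau$ is no longer possible, since $\tau$ is real. Sontag's theorem handles real existential blocks directly, so no bound on the bit size of a witnessing trigger is needed. That is the step I would state carefully.
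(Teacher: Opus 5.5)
Your main argument is correct and is essentially the paper's proof. The paper uses the same one-clause tautology $\varphi=(u\vee\neg u\vee u)$; you have it up to literal order and an unused $\bar v$-variable. It also uses the same trigger $\tau=\frac12$, with Lemma~\ref{lem:cnf}(iii) giving $M=0<m-\frac12$, and membership by Lemma~\ref{lem:uniform} with prefix $\exists\tau\,\forall x$. Your explicit handling of the tie-breaking (strict inequalities for $j'<j$, non-strict for $j'>j$) is a correct expansion of the paper's one-line membership argument.

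One correction: the fallback example for the case that degenerate clauses are disallowed does not work, and ``the same computation applies'' is false there. Take $\varphi=(u\vee v_1\vee v_1)\wedge(\neg u\vee v_1\vee v_1)$, which is equivalent to $v_1$. Whatever real $\tau$ you put on $z_u$, the input with $z_{v_1}=1$ gives $\alpha_{v_1}=1$. Hence $s_C\ge2$ and $\gamma_C=1$ for both clauses, so $M=2=m>m-\frac12$, and this relaxed instance is \emph{negative}.

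Abstention on $u$ only kills the $u$-literals. It helps only if the $\bar v$-literals remaining in the affected clauses cannot all be made true at once. A non-tautological example that does work is $(u\vee v\vee v)\wedge(\neg u\vee\neg v\vee\neg v)$:
\begin{itemize}
\item For each Boolean $u$, the choice $v=\neg u$ satisfies $\varphi$, so $\exists u\,\forall v\,\psi$ is false.
\item At $\tau=\frac12$, at most one of $\alpha_v,\alpha_{\neg v}$ is non-zero, so $M\le1<\frac32=m-\frac12$ at every input.
\end{itemize}
Since the paper itself uses a tautological clause, you can simply drop the fallback.
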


\begin{proof}
Take the $3$-CNF formula $\varphi:=(u\vee\neg u\vee u)$, with the single
$\bar u$-variable $u$, no $\bar v$-variables and $m=1$; equivalently
$\psi=\neg\varphi$, which is unsatisfiable. Since $\varphi(\bar u,\cdot)$
is satisfiable for both values of $\bar u$, the equivalence~(1) shows that
$\exists\bar u\forall\bar v\,\psi$ is false. The reduction of
Theorem~\ref{thm:trig} produces a $\TRIG$ instance whose patch is the
single coordinate $z_u$. The real trigger $\tau:=\frac12$ fixes the only
input coordinate at $\frac12$, so $M=0$ by Lemma~\ref{lem:cnf}(iii). As
$0<m-\frac12$, the target class is achieved at every input, and the
relaxed instance is positive.

Membership is Lemma~\ref{lem:uniform} with the prefix
$\exists\tau\,\forall x$.
\end{proof}

Whether the real-valued version is $\Sig2$-hard is open. We do not claim
that it is easier; what Proposition~\ref{prop:relax} establishes is that
hardness cannot be transferred along this reduction, so a proof would have
to defeat the abstention strategy rather than encode around it, and the two
problems are not interchangeable for the purpose of moving hardness between
them.

\paragraph{Practical reading.} Three consequences, of which the third is
the one we would press.

First, backdoor detection is provably harder than robustness
certification. A robustness verifier is a co-NP oracle. If $\TRIG$ were
decidable by polynomially many calls to such an oracle it would lie in
$\mathrm{P}^{\mathrm{NP}}=\Delta_2^{\mathrm P}$, and a $\Sig2$-complete
problem in $\Delta_2^{\mathrm P}$ collapses the polynomial hierarchy to
$\Delta_2^{\mathrm P}$. So no scheme that reduces backdoor scanning to a
bounded number of verification queries can be complete.

Second, this is consistent with the shape of the empirical literature.
Detection methods such as Neural Cleanse \cite{NeuralCleanse} are search
procedures over candidate triggers wrapped around a verification-like inner
test - that is, $\Sig2$ algorithms with the outer search done
heuristically. The theory says the outer search cannot be dispensed
with.

Third, and most concretely: gradient-based trigger reconstruction relaxes
the trigger to a continuous variable, and Proposition~\ref{prop:relax} says
that the relaxation is not a faithful stand-in for the quantised problem.
The relaxed search is permitted to occupy states - a patch value that
activates neither a feature nor its complement - that no realisable
trigger can occupy, and on the formula exhibited there those states alone
make a clean network look backdoored. This is a candidate explanation, at
the level of problem structure rather than optimisation, for the well-known
phenomenon that such methods reconstruct ``triggers'' that do not
correspond to any planted backdoor. A detection method whose search space
is the quantised patch space is solving the stated problem; one whose
search space is its convex relaxation is solving a different one, and the
same caution applies as in the finite-precision setting studied by Alsmann,
Lange and S\"alzer \cite{ALS}, where the arithmetic in which a network is
evaluated likewise changes the problem rather than approximating it.

\subsection{Universal vulnerability}

The dual prefix formalises the statement that a network is nowhere robust.
It is the certified form of the observation behind universal adversarial
perturbations \cite{UAP}, with the quantifiers the other way round: there
the same perturbation fools most inputs, here every input is fooled by some
perturbation.

\begin{problem}[$\UVUL(F)$]\label{prob:uvul}
Given $N$, an LP specification $A$, a rational $\varepsilon>0$ and a metric
$d\in\{d_1,d_\infty\}$, decide whether for every $x\in A$ there is $\tau$
with $\|\tau\|\le\varepsilon$ and
$\argmax_{j}N(x+\tau)_{j}\neq\argmax_{j}N(x)_{j}$.
\end{problem}

\begin{proposition}\label{prop:uvul}
$\UVUL(F)\in\Pih2$ for semilinear $F$ with rational coefficients.
\end{proposition}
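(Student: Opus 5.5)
The plan is to write $\UVUL$ as a $\forall\exists$ sentence of real addition and then apply Lemma~\ref{lem:uniform} with prefix $\forall x\,\exists\tau$. The outer block is $x$, together with the node values $\bar y$ of $N$ on $x$. The inner block is $\tau$, together with the node values $\bar y'$ of $N$ on $x+\tau$. This is an input-space perturbation in the sense of Definition~\ref{def:schema}, with $\oplus$ being addition, so it enters linearly. By the remark in the proof of Lemma~\ref{lem:uniform}, the node variables join their own block without creating an alternation. Concretely, the sentence is
\[
  \forall x\,\forall\bar y\ \Bigl(G(x,\bar y)\wedge A(x)\ \rightarrow\ \exists\tau\,\exists\bar y'\ \bigl(G(x+\tau,\bar y')\wedge \|\tau\|\le\varepsilon\wedge \Psi(\bar y,\bar y')\bigr)\Bigr),
\]
where $G$ is the linear-size graph formula of Lemma~\ref{lem:uniform}. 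This sentence can be brought to prenex form with prefix $\forall\exists$, and Theorem~\ref{thm:sontag} then places it in $\Pih2$.

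Two ingredients must be shown to be quantifier-free linear formulas of polynomial size. The first is the norm bound. For $d_\infty$ it is the conjunction $-\varepsilon\le\tau_i\le\varepsilon$ over all $i$. For $d_1$ I would not expand $2^n$ sign patterns. Instead I would add auxiliary variables $s_i$ to the inner existential block, with $-s_i\le\tau_i\le s_i$ and $\sum_i s_i\le\varepsilon$. This is linear and only adds existential variables, so the prefix is unchanged.

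The second ingredient, and the main point needing care, is the formula $\Psi$ expressing that the classification changes, including the tie-breaking rule towards the smallest index. I would first write $\mathrm{cls}_k(\bar o)$ for the statement that $\argmax_j o_j=k$, namely $\bigwedge_{j<k} o_j<o_k\wedge\bigwedge_{j>k}o_j\le o_k$. This uses $O(m)$ atoms, so it has size $O(m^2)$ over all $k$. Then I set
\[
  \Psi:=\bigvee_{k=1}^m\bigl(\mathrm{cls}_k(\bar o)\wedge\neg\,\mathrm{cls}_k(\bar o')\bigr),
\]
where $\bar o$ and $\bar o'$ are the output coordinates of $\bar y$ and $\bar y'$. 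This has polynomial size, and since strict and non-strict inequalities and negation are all available in real addition, it needs no special handling.

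Finally, the total size is polynomial in the instance, and $\varepsilon$ and the coefficients of $A$ are binary-encoded rationals, as Theorem~\ref{thm:sontag} requires. This gives $\UVUL(F)\in\Pih2$ for every finite set $F$ of semilinear activations with rational coefficients.
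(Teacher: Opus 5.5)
Your proof is correct and follows the paper's argument. The paper's proof is the single line ``Lemma~\ref{lem:uniform} with the prefix $\forall x\,\exists\tau$'', and you have written that sentence out explicitly, including the tie-breaking formula for a change of classification. Your handling of $d_1$ through existential auxiliary variables $s_i$ is a detail the paper leaves implicit (compare the $d_1$ discussion in Theorem~\ref{thm:inv}), and it is sound because the norm constraint lies in the inner existential block, so it adds no alternation.
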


\begin{proof}
Lemma~\ref{lem:uniform} with the prefix $\forall x\,\exists\tau$.
\end{proof}

Hardness is open. We note that the complement asks for the existence of a
single $\varepsilon$-robust input, which is the decision version of the
question an empirical robustness evaluation is really asking, and that the
same relaxation subtlety as in Section~\ref{sec:relax} may arise on the
inner quantifier.

\paragraph{Practical reading.} $\UVUL$ is the formal content of the claim
``this model is not robust anywhere in the operational domain'', which is
what an adversarial evaluation reports when every sampled input admits an
attack. The evaluation establishes the property on a finite sample; the
decision problem is the certified version, and it sits at the second level
of the hierarchy, one above the per-input robustness question
$\mathrm{CR}$ that verifiers answer. Certifying global non-robustness is
therefore harder than certifying local robustness - an asymmetry worth
knowing when a safety case tries to argue that a fallback mechanism is
always needed.

\section{Parameter-space uncertainty}\label{sec:weights}

Every robustness notion in the verification literature perturbs the
\emph{input}. Several of the failure modes that certification authorities
care about perturb the \emph{weights}: bit-flip and fault-injection
attacks on the memory holding the parameters \cite{BitFlip},
single-event upsets and other transient faults in the accelerators that
evaluate the network \cite{ErrorProp}, a concern in avionics and space
electronics; the device variation that
makes analog and in-memory computing fast but uncertain \cite{InMemory},
and the deviation introduced by post-training quantisation. Robustness to
weight perturbation has been formalised and studied empirically
\cite{WeightPerturb}; what follows is its worst-case complexity.

Two features of these fault models shape the right formalisation. A fault
is local, so each stored quantity should carry its own uncertainty set. And
a fault corrupts a \emph{stored parameter}, not an edge of the computation
graph: a parameter held once in memory and read on many edges - as in
every convolutional layer, and in any implementation backed by a shared
weight table - is corrupted on all of them at once. We therefore separate
parameters from edges.

\begin{definition}[Parameterised architecture]\label{def:arch}
A \emph{parameterised architecture} $\mathcal A$ consists of a layered
graph with edge set $E$, a parameter index set $\{1,\dots,P\}$, a
\emph{parameter map} $\pi:E\to\{1,\dots,P\}$, rational biases, and an
assignment of activations to nodes. A parameter vector $\theta\in\R^P$
induces the network $N_\theta$ in which edge $e$ carries the weight
$\theta_{\pi(e)}$. The \emph{sharing degree} of $\mathcal A$ is
$\max_p|\pi^{-1}(p)|$; an architecture of sharing degree $1$ is an ordinary
network, in which parameters and edges coincide.
\end{definition}

\begin{definition}[Parameter specification]\label{def:wspec}
A \emph{box parameter specification} for $\mathcal A$ consists of a nominal
vector $\theta^0\in\Q^P$ and budgets $\delta\in\Q_{\ge0}^P$, and denotes
\[
  W\ :=\ \prod_{p=1}^{P}\bigl[\theta^0_p-\delta_p,\ \theta^0_p+\delta_p
  \bigr].
\]
A parameter with $\delta_p=0$ is \emph{exact}, and the \emph{width} of $W$
is $\max_p\delta_p$. More generally a \emph{parameter specification} is any
LP instance over the variables $\theta_1,\dots,\theta_P$.
\end{definition}

\begin{problem}[$\WRE(F)$, $\WRA(F)$]\label{prob:wr}
Given a parameterised $F$-architecture $\mathcal A$, a parameter
specification $W$ and LP specifications $A,B$:
\[
  \begin{aligned}
    \WRE(F):&\quad \exists \theta\in W\ \ \exists x\in A:\ \
      N_\theta(x)\in B,\\[2pt]
    \WRA(F):&\quad \forall \theta\in W\ \ \forall x\in A:\ \
      N_\theta(x)\in B.
  \end{aligned}
\]
\end{problem}

$\WRE$ is the fault-injection attacker's question - is there a fault
within the model that drives the network into the unsafe region $B$ -
and $\WRA$ is the certifier's.

\begin{theorem}\label{thm:wr}
$\WRE(\{\mathrm{id}\})$ is $\ER$-complete. Hardness holds already for
instances in which
\begin{enumerate}[label=(\alph*),leftmargin=*]
\item $W$ is a box parameter specification,
\item the sharing degree is at most $2$,
\item every budget satisfies $\delta_p\le P^{-c}$, for any constant $c>0$
fixed in advance, and
\item $B$ is a conjunction of linear equations on output node values.
\end{enumerate}
The same holds verbatim with $\ReLU$ in place of $\mathrm{id}$. Moreover
$\WRA(\{\ReLU\})$ is $\FER$-complete, already under (a)-(c) and with $B$ a
single open half-space.
\end{theorem}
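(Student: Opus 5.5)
The plan is to prove membership by writing each problem out as a sentence over the reals with multiplication. Hardness then comes from a direct reduction from $\RETRINV$ with small range parameter (Theorem~\ref{thm:retrinv}), in which each variable $\xi_i$ becomes a parameter whose box is exactly $I(\xi_i)$.

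\emph{Membership.} For $\WRE(F)$ with $F\subseteq\{\mathrm{id},\ReLU\}$, quantify existentially over $\theta$, $x$ and the node values $\bar y$. The formula is the conjunction of $\theta\in W$, $x\in A$, the output condition $\bar y_{\mathrm{out}}\in B$, and one constraint per node. For a node with incoming edges $e=(u,v)$, the constraint is $y_v=\sigma\bigl(\sum_e\theta_{\pi(e)}y_u+b_v\bigr)$. Here $\sigma(z)$ for $\ReLU$ is written as $(z\le0\wedge y=0)\vee(z\ge0\wedge y=z)$. This is an existential sentence with polynomial (degree-two) atoms, of size linear in the instance, so $\WRE(F)\in\ER$. $\WRA(F)$ asks whether the negation of such a sentence is true (replace $B$ by its complement, a disjunction of linear constraints). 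Hence $\WRA(F)\in\FER$.

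\emph{Hardness of $\WRE(\{\mathrm{id}\})$.} Given a $\RETRINV$ instance with variables $\xi_1,\dots,\xi_k$ and range parameter $\delta$, build the network as follows.
\begin{enumerate}[label=(\roman*),leftmargin=*]
\item Use one input $x$, constrained by $A=\{x=1\}$.
\item For each \emph{occurrence} of a variable $\xi_i$ in a constraint, introduce a fresh parameter $\theta$ with nominal value the midpoint of $I(\xi_i)$ and budget half its length. Add a hidden node $h=\theta\cdot x=\theta$.
\item Tie all copies of $\xi_i$ together with output nodes $h-h'$, with exact weights $\pm1$ (each a separate parameter of budget $0$), and require them to be $0$ in $B$.
\item An addition constraint $\xi_a+\xi_b=\xi_c$ becomes an output $h_a+h_b-h_c$ with exact weights, required to equal $0$.
\item A multiplication constraint $\xi_a\xi_b=1$ becomes an output node $g$ receiving $h_a$ along an edge whose weight is the copy-parameter of $\xi_b$ for that occurrence, so $g=\theta_a\theta_b$. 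Require $g=1$.
\end{enumerate}
Each non-exact parameter then sits on at most two edges: the edge $x\to h$ and at most one multiplication edge. Exact parameters sit on one edge. So the sharing degree is at most $2$, $W$ is a box, and $B$ is a conjunction of linear equations on outputs. Satisfying assignments correspond exactly to pairs $(\theta,x)$ with $N_\theta(x)\in B$.

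For (c), the number of parameters is $P=O(k+\#\text{constraints})$, which is polynomial in the instance size. So choosing the exponent in Theorem~\ref{thm:retrinv} large enough gives $\delta_p\le\delta\le P^{-c}$. I expect this bookkeeping, together with keeping the sharing degree at $2$ through the copying, to be the delicate step; I would check it first.

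\emph{Extension to $\ReLU$, and the $\FER$ statement.} With $\ReLU$, every hidden value lies in $[\frac12,2]$ or in $[\frac14,4]$ because $x=1$ and $\theta\in[\frac12,2]$, so every $\ReLU$ on a positive argument is the identity. The only signed quantities are the linear residues $r=h_a+h_b-h_c$ and $h-h'$. Each is replaced by the two outputs $\ReLU(r)$ and $\ReLU(-r)$, both required to be $0$; this preserves the equivalence.

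For $\WRA(\{\ReLU\})$, take the same architecture and add one final output $s$. It sums $\ReLU(r)+\ReLU(-r)$ over all linear residues $r$. It also sums $\ReLU(g-1)+\ReLU(1-g)$ over all products, where $1$ is supplied as a bias. Let $B=\{s>0\}$, a single open half-space. Then $\forall\theta\in W\,\forall x\in A:\ s>0$ holds iff the $\RETRINV$ instance is unsatisfiable. The complement of an $\ER$-complete problem is $\FER$-hard, and this gives $\FER$-completeness under (a)–(c).
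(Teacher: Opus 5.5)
Your proposal is correct in outline and takes the paper's route. Membership comes from writing out the bilinear node equations. Hardness is a reduction from $\RETRINV$ in which each variable becomes a parameter boxed by $I(\xi_i)$, and one parameter is read on two edges: once from the constant input, to expose its value as a node, and once to form the product. The $\ReLU$ case uses positivity of all node values, and for $\WRA$ a sum of $\ReLU(\ell)+\ReLU(-\ell)$ terms with $B=\{s>0\}$. The paper is leaner: it copies only the factor $\xi_b$ of each inversion constraint and writes ties and sums directly as equations in $B$. You copy every occurrence and tie copies with difference nodes. This keeps the network two-layered without padding, but it is the same idea.

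The step that does not go through as written is (c). Theorem~\ref{thm:retrinv} gives range parameter $O(k^{-c'})$ in the number $k$ of \emph{variables}, not in the instance size. Your $P$ grows with the number of constraints, and in your version also with the per-occurrence copies and the exact $\pm1$ weights. So ``$P$ is polynomial in the instance size'' does not yield $\delta\le P^{-c}$ for any $c'$ fixed in advance: a constraint list padded with duplicates makes $P$ arbitrarily large relative to $k$ while $\delta$ stays the same. The fix is the paper's observation that there are at most $k^3+k^2$ distinct constraints over $k$ variables. After deleting duplicates, $P=O(k^3)$ if the copies of each variable are tied along a chain (a fixed higher power of $k$ if tied pairwise). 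Taking $c'=3c+1$, or correspondingly larger, then gives $\delta_p\le P^{-c}$ for all large $k$. The same bound covers the extra exact parameters you add for $\WRA$.
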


\begin{proof}
\emph{Membership.} Introduce a real variable for every parameter and every
node value. Each node contributes
$y_v=\sum_u \theta_{\pi(uv)}y_u+b_v$, a polynomial
equation with rational coefficients, bilinear in the unknowns; $W$, $A$
and $B$ contribute linear inequalities. The instance is positive iff the
resulting system has a real solution, so $\WRE(F)\in\ER$ for every
semilinear $F$, the semilinear activations contributing Boolean
combinations of linear conditions. $\WRA(F)\in\FER$ by complementation.

\emph{Hardness of $\WRE$.} We reduce from $\RETRINV$
(Theorem~\ref{thm:retrinv}); the range parameter $\delta$ is chosen at the
end, when we verify clause (c). Let the instance have $k$ variables
$\xi_1,\dots,\xi_k$ with promised intervals $I(\xi_i)\subseteq[\frac12,2]$
and constraint list $C$. The architecture has a single input node $x_1$ and
the input specification is $A:=\{x_1=1\}$, so that $x_1$ carries the value
$1$. All nodes are identity nodes. Each parameter introduced below is given
the box named with it, and every other parameter is exact with nominal
value $1$.
\begin{itemize}[leftmargin=*]
\item \emph{Variables.} For each $\xi_i$ a parameter $p_i$ with box
$I(\xi_i)$, and a node $v_i$ with a single incoming edge from $x_1$
carrying $p_i$ and bias $0$, so that $v_i=\theta_{p_i}$ ranges exactly over
$I(\xi_i)$. The node $v_i$ \emph{is} the variable $\xi_i$, and $\pi$ is
injective on these edges.
\item \emph{Inversions.} For each constraint $\xi_a\cdot\xi_b=1$ a fresh
parameter $q$ with box $I(\xi_b)$, read on exactly two edges: an edge from
$v_a$ into a new node $m$, and an edge from $x_1$ into a new node $o$, both
with bias $0$. Then $m=\theta_q\,v_a$ and $o=\theta_q$. Add to $B$ the
linear equations $o=v_b$ and $m=1$. The first forces $\theta_q=\xi_b$,
whereupon the second says precisely $\xi_a\xi_b=1$. The node $o$ makes the
\emph{value of a parameter} available as the \emph{value of a node}, which
is what lets the output specification state an equality between two
parameters; this is the one place where sharing is used.
\item \emph{Sums.} For each constraint $\xi_a+\xi_b=\xi_c$ add to $B$ the
linear equation $v_a+v_b=v_c$. No new parameter is needed.
\end{itemize}
Pad with exact edges of nominal weight $1$ so that all named nodes lie in
the output layer. The architecture has $O(k+|C|)$ nodes and $P=O(k+|C|)$
parameters, and sharing degree $2$, attained only by the parameters $q$.

For clause (c), note first that we may assume $|C|=O(k^3)$: there are at
most $k^3+k^2$ distinct constraints over $k$ variables, and duplicates may
be deleted without affecting satisfiability. Hence $P=O(k^3)$. Given the
constant $c$, apply Theorem~\ref{thm:retrinv} with range parameter
$\delta=O(k^{-(3c+1)})$; then every budget is at most $\delta\le P^{-c}$
for all sufficiently large $k$, and the finitely many smaller instances may
be decided outright. A pair $(\theta,x)$ witnessing $\WRE$ assigns to each
$v_i$ a value in $I(\xi_i)$ satisfying all the sum and inversion
constraints, that is, a satisfying assignment of the $\RETRINV$ instance;
and conversely. Hence $\WRE(\{\mathrm{id}\})$ is $\ER$-hard under (a)-(d).

\emph{The $\ReLU$ case.} Every node value occurring in the construction is
positive: $x_1=1$, $v_i\in I(\xi_i)\subseteq[\frac12,2]$,
$o=\theta_q\in[\frac12,2]$, $m=\theta_q v_a\in[\frac14,4]$, and the padding
edges have weight $1$; all biases are $0$. Declaring every node a $\ReLU$
node therefore changes no value, and the same instance witnesses
$\ER$-hardness of $\WRE(\{\ReLU\})$.

\emph{Hardness of $\WRA(\{\ReLU\})$.} Let $\ell_1,\dots,\ell_r$ be the
rational affine forms in the node values $v_i,o,m$ whose vanishing
expresses the conjunction $B$ above - one for each of the equations
$o-v_b$, $m-1$ and $v_a+v_b-v_c$ - so that the $\RETRINV$ instance is
satisfiable iff $\exists\theta\in W$ with $\ell_1=\dots=\ell_r=0$. Extend
the network by the nodes $\ReLU(\ell_s)$ and $\ReLU(-\ell_s)$ for $s\le r$
and the single output node
\[
  h\ :=\ \sum_{s=1}^{r}\bigl(\ReLU(\ell_s)+\ReLU(-\ell_s)\bigr),
\]
which is non-negative, so the $\ReLU$ on the output node is the identity
there. Then $h\ge0$ always, and $h(\theta)=0$ iff every $\ell_s$ vanishes.
All added edges are exact. Take $B':=\{h>0\}$, a single open half-space.
Since $A$ pins the input,
\begin{align*}
  \WRA(\mathcal A,W,A,B')\ \text{holds}
  \ &\iff\ \forall\theta\in W:\ h(\theta)>0\\
  \ &\iff\ \text{the $\RETRINV$ instance is unsatisfiable},
\end{align*}
and unsatisfiability of $\RETRINV$ is $\FER$-complete by
Theorem~\ref{thm:retrinv}. The reduction is computable in linear time and
preserves (a)-(c).
\end{proof}

\begin{remark}[What is not claimed]\label{rem:notclaimed}
Three restrictions of Theorem~\ref{thm:wr} are worth naming, because each
is a natural question we leave open.

\emph{Sharing.} Hardness uses sharing degree $2$: the parameter $q$ is read
both on the edge forming the product and on the edge exposing its value as
a node value. For an architecture of sharing degree $1$ under a box
specification there is no mechanism for equating two parameters, and we do
not know the complexity of $\WRE$; this is Open
Problem~\ref{op:sharing}. Under the general LP parameter specification of
Definition~\ref{def:wspec} the tying is expressible directly, so there the
theorem holds at sharing degree $1$ as well.

\emph{The identity case of $\WRA$.} The $\FER$-hardness argument aggregates
a conjunction of violations into a single half-space using $\ReLU$, which
an $\{\mathrm{id}\}$-network cannot do: an affine network cannot compute a
non-negative function vanishing exactly on a prescribed affine subspace.
Whether $\WRA(\{\mathrm{id}\})$ is $\FER$-hard is open.

\emph{Single-parameter faults.} If exactly one parameter is inexact, every
node value is affine in that parameter and the constraint system is
bilinear in one scalar unknown and the inputs. Theorem~\ref{thm:wr} says
nothing about this case, and we do not know its complexity.
\end{remark}

\begin{remark}[The point of the theorem]\label{rem:point}
For $\{\mathrm{id}\}$-networks, \emph{every} verification problem studied
in \cite{WurmJ} lies in $\mathrm P$: such a network computes an affine map
and each question is a linear program \cite[Prop.~3]{WurmJ}.
Theorem~\ref{thm:wr} says that making the parameters uncertain, within
boxes whose width shrinks inverse-polynomially in the size of the network,
takes the same networks from $\mathrm P$
to $\ER$-complete. Nothing about the activation function is involved,
which is precisely why the phenomenon is invisible in a classification
indexed by activation functions: the cause is that a parameter multiplies
an activation, so uncertainty in the parameters makes the network a
polynomial rather than a linear map of the unknowns. Tightening the
tolerance is therefore no help, which is clause (c) of the theorem and the
reason for reducing from the range-restricted $\RETRINV$ of \cite{AMS}
rather than from $\ETRINV$: the latter confines its variables to the fixed
box $[\frac12,2]$, whose relative width no rescaling of the network can
reduce.

The summary is worth stating as a slogan. \emph{The activation function
decides the complexity of input-space questions; the arithmetic decides the
complexity of parameter-space questions.} The first axis is classified; the
second is not, and on it even the trivial activation is $\ER$-complete.
\end{remark}

\begin{remark}
The result also sharpens the relation to learning. Training $\ReLU$
networks is $\ER$-complete \cite{AKM}, and the intuition usually offered is
that training searches over weights. Theorem~\ref{thm:wr} isolates that
intuition from everything else: it is neither the optimisation nor the
activation that produces $\ER$-hardness, but the product of a parameter and
an activation. Verification under parameter uncertainty is $\ER$-complete
for the simplest networks there are, and for arbitrarily tight tolerances.

It is worth contrasting this with the smoothed-analysis result of
Soltanalian \cite{Soltanalian}, which perturbs every parameter by
independent clipped Gaussian noise and shows that exact $\ReLU$
verification has no smoothed-polynomial complete algorithm unless
$\mathrm{NP}\subseteq\mathrm{BPP}$. There the parameters are randomised
once, as a preprocessing of the instance, and the problem asked afterwards
is still the input-space one; the content is that worst-case hardness
survives generic noise. Here the parameters are quantified inside the
problem, over an adversarial set, and the content is that the problem
leaves the polynomial hierarchy. Read together, the two say that parameter
uncertainty neither dissolves the difficulty of verification nor leaves it
where it was.
\end{remark}

\paragraph{Practical reading.} The consequence for tool building is
concrete and, we think, under-appreciated. Existing complete verifiers
encode a $\ReLU$ network as a mixed integer linear program \cite{Tjeng} or
solve it by SMT and branch and bound \cite{Marabou,BetaCrown}: the weights
are constants, the activations are the binary variables, and the whole
question is linear. It is tempting to certify against parameter faults by
adding the parameters as variables to the same encoding. Theorem~\ref{thm:wr}
says that this cannot work as stated, because the resulting constraints are
bilinear; and more than that, it says no reformulation can restore an MILP
encoding of polynomial size unless $\ER=\mathrm{NP}$. Certified parameter
robustness is not the input-space problem with a larger box. It is a
problem of a different arithmetic type, and the methods that apply to it
are those of polynomial optimisation - semidefinite relaxations,
branch-and-bound over parameter intervals, interval arithmetic with
refinement - all of which are incomplete in a way the MILP methods are
not.

For the fault models themselves the reading is as follows. Analog and
in-memory accelerators \cite{InMemory} give every parameter a small
non-zero budget simultaneously, which is exactly the setting of
Theorem~\ref{thm:wr}; and clause (c) says that tightening the device
tolerance does not help, since the problem is already $\ER$-complete at
inverse-polynomial width. A bit flip in a fixed-point weight
\cite{BitFlip} corrupts one stored parameter, which in a
convolutional layer is read on many edges at once; that the hardness
construction requires a parameter read on more than one edge is therefore
not an artefact but a feature of the fault model. The case in which a
single corrupted parameter is read on a single edge is a different
question, and Remark~\ref{rem:notclaimed} records it as open: there the
node values are affine in the unknown, and the reduction above says
nothing. What the theorem does establish is that the
simultaneous-tolerance model, which is the one analog hardware presents, is
outside the reach of the current complete-verification stack, and that is
consistent with the mitigations in use being redundancy and
error-correcting codes on the parameter memory rather than verification.

\section{Privacy and provenance}\label{sec:privacy}

We turn to three properties concerning what a model reveals and what it
can be made to forget. Each has a clear practical driver, and they occupy
three different levels of the picture.

\subsection{Inversion resistance}

Model inversion attacks \cite{ModelInversion} reconstruct an input, or a
sensitive attribute of it, from the model's output. Whether such reconstruction is possible at all
is a geometric question: does the output pin the input down?

\begin{problem}[$\INV(F)$]\label{prob:inv}
Given an $F$-network $N$, LP specifications $A$ and $B$, a rational
$\delta\ge0$ and a metric $d\in\{d_1,d_\infty\}$, decide whether
\[
  \diam\bigl(\{x\in A: N(x)\in B\}\bigr)\ \le\ \delta ,
\]
that is, whether all $x,x'\in A$ with $N(x),N(x')\in B$ satisfy
$d(x,x')\le\delta$.
\end{problem}

A network is \emph{inversion resistant} for the output region $B$ when the
answer is negative for small $\delta$: many admissible inputs produce
outputs in $B$, so observing an output in $B$ reveals little. The decision
problem as stated asks the opposite, for convenience of statement.

\begin{theorem}\label{thm:inv}
$\INV(\ReLU)$ is co-NP-complete.
\end{theorem}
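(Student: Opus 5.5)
Membership will come from Lemma~\ref{lem:uniform} with $k=1$. The property is a single universal block over two copies of the network: for all $x,x'$ (with their node values), if $x,x'\in A$ and $N(x),N(x')\in B$ then $d(x,x')\le\delta$. For $d_\infty$ the conclusion is the conjunction $\bigwedge_i (x_i-x'_i\le\delta\wedge x'_i-x_i\le\delta)$. For $d_1$ it is the conjunction over all sign vectors $s\in\{\pm1\}^n$ of $\sum_i s_i(x_i-x'_i)\le\delta$. That conjunction is exponentially large, so instead I will introduce auxiliary variables $t_i$ with $t_i\ge x_i-x'_i$ and $t_i\ge x'_i-x_i$. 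The complement then becomes existential: there exist $x,x',t$ with the graph formulas, $A$, $B$, $t_i=|x_i-x'_i|$ (written as a constant-size disjunction per coordinate) and $\sum_i t_i>\delta$. This is a polynomial-size existential sentence of real addition, hence in NP by Theorem~\ref{thm:sontag}, so $\INV\in$ co-NP. The abstract's claim for general $\ell_p$ is not needed here, since the problem restricts $d$ to $d_1,d_\infty$.

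For hardness I will reduce from 3-SAT to the complement, using the violation gadget $\hat g_\varphi$ of Definition~\ref{def:gadget}, as in Section~\ref{sec:notnew}. The idea is to make the preimage of $B$ a single point when $\varphi$ is unsatisfiable and large otherwise. Take inputs $(z,t)\in\R^n\times\R$ and let $A:=\{0\le z_a\le1,\ 0\le t\le1\}$. Output two nodes, $o_1:=\hat g_\varphi(z)$ and $o_2:=\ReLU(t-\hat g_\varphi(z))$, or more simply $o_2:=t$. Let $B:=\{o_2\le o_1\}$, i.e.\ $t\le\hat g_\varphi(z)$.
\begin{itemize}
\item If $\varphi$ is unsatisfiable, $\hat g\equiv0$ and the preimage is $\{(z,0):z\in[0,1]^n\}$, which still has diameter $1$ because $z$ is free.
\item To fix this I will also pin $z$ by adding outputs that force it. Better: put the variable in $t$ alone and let $z$ range freely, but make the diameter be measured only through a scaled copy. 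Concretely, take $B:=\{t\le \hat g_\varphi(z)\}$ and additionally require $z_a=\frac12$ for all $a$ unless $t>0$. That is awkward, so the cleaner choice is to reduce instead from the complement with $\delta$ chosen larger than the $z$-spread.
\end{itemize}

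So set $\delta:=1$ with $d_\infty$ and $A:=\{z\in[0,1]^n,\ t\in[0,2]\}$. Use the output $t\cdot$ scaled as $o_2:=\ReLU(t-1-\hat g_\varphi(z))$ and $B:=\{o_2=0\}$, which means $t\le 1+\hat g(z)$.
\begin{itemize}
\item If $\varphi$ is unsatisfiable, the preimage is $[0,1]^n\times[0,1]$, of $d_\infty$-diameter exactly $1\le\delta$.
\item If $\varphi$ is satisfiable, a satisfying Boolean $z^*$ has $\hat g(z^*)=1$. Then $(z^*,2)$ and $(z^*,0)$ both lie in the preimage at distance $2>\delta$.
\end{itemize}
Hence the diameter is at most $\delta$ iff $\varphi$ is unsatisfiable. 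For $d_1$ the same idea works after weighting $t$ heavily: let $t\in[0,2K]$ with $K:=n+1$, $B:=\{t\le K(1+\hat g(z))\}$ and $\delta:=n+K$. In the unsatisfiable case the diameter is at most $n+K$; in the satisfiable case the distance is $2K>n+K$. All constructions are linear time, and the scaling by $K$ is a linear form in the next node, so the network stays a legal $\ReLU$-network with non-negative outputs.

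The main obstacle is exactly this calibration: $z$ varies freely over its box, so the free directions contribute diameter regardless of $\varphi$. The step to get right is making the $\varphi$-dependent extent in $t$ strictly dominate that baseline under each metric, and checking that $B$ remains an LP specification on non-negative $\ReLU$ outputs.
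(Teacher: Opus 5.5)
Your proposal is correct, but the hardness half takes a different route from the paper. The paper reduces from the complement of $\NNReach(\ReLU)$. It appends a dummy input $t\in[0,1]$ with no outgoing edges and takes $\delta:=\frac12$, so that the preimage is $R\times[0,1]$ for the reachable preimage $R$. That argument is shorter and black-box in the reachability instance, so it transfers to any activation set for which reachability is hard. However, in the negative case the preimage is empty, and the paper's footnote and open problem~(4) note that hardness under the promise of a non-empty preimage does not follow.

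Your reduction from 3-SAT through $\hat g_\varphi$ never passes through emptiness. Every $(z,0)$ with $z\in[0,1]^n$ lies in the preimage. For unsatisfiable $\varphi$ the preimage is the full box $[0,1]^n\times[0,1]$, of $d_\infty$-diameter exactly $\delta$. So the calibration you identified as the main obstacle buys co-NP-hardness under precisely the non-emptiness promise the paper leaves open, and you should state this explicitly.

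For $d_1$ membership, your per-coordinate disjunction for $t_i=|x_i-x'_i|$ is the paper's guessed sign vector folded into a polynomial-size existential formula, so Theorem~\ref{thm:sontag} applies directly. Only the equality (or $t_i\le|x_i-x'_i|$) is sound there: with the one-sided constraints $t_i\ge\pm(x_i-x'_i)$ you mention first, $\sum_i t_i>\delta$ holds trivially.

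Since $d$ is part of the input, your $d_\infty$ instance already suffices for hardness. The $d_1$ variant with $K=n+1$ is also correct and gives hardness for each fixed metric. In a write-up, drop the abandoned first attempt, and state $B$ as $\{o_2\le 0\}$ on the output node rather than as a condition on the inputs.
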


\begin{proof}
\emph{Membership.} The condition is
$\forall x\,\forall x'\,(\cdots\rightarrow d(x,x')\le\delta)$, a single
universal block. For $d=d_\infty$ the conclusion
$\max_i|x_i-x'_i|\le\delta$ is the conjunction of the $2n$ linear
inequalities $x_i-x'_i\le\delta$ and $x'_i-x_i\le\delta$, so
Lemma~\ref{lem:uniform} with $k=1$ applies.

For $d=d_1$ the lemma does \emph{not} apply directly: the condition
$\sum_i|x_i-x'_i|\le\delta$ is equivalent to the conjunction of
$\sum_i s_i(x_i-x'_i)\le\delta$ over all sign vectors
$s\in\{-1,1\}^n$, and that quantifier-free formula has exponential size.
We argue instead on the complement, which asks for $x,x'\in A$ with
$N(x),N(x')\in B$ and $\sum_i|x_i-x'_i|>\delta$. Guess the activation
patterns of the two copies of $N$ and a sign vector
$s\in\{-1,1\}^n$, and verify the single linear inequality
$\sum_i s_i(x_i-x'_i)>\delta$ over the resulting polyhedron by linear
programming. This is sound because $\sum_i s_i t_i\le\sum_i|t_i|$ for every
$s$, and complete because $s_i=\operatorname{sign}(x_i-x'_i)$ attains the
sum; so the complement is in NP and $\INV(\ReLU)\in\mathrm{co\text-NP}$ for
$d_1$ as well. (Proposition~\ref{prop:invp} below gives a second route, the
case $p=1$ of a uniform argument for all $d_p$.)

\emph{Hardness.} We reduce the complement of $\NNReach(\ReLU)$, which is
co-NP-hard \cite{SL}. Let $(N,A,B)$ be a reachability instance with $N$ on
$n$ inputs. Let $N'$ be the network on $n+1$ inputs $(x,t)$ defined by
$N'(x,t):=N(x)$, obtained from $N$ by adding an input node with no outgoing
edges; let $A':=A\wedge 0\le t\le 1$ and $B':=B$; and let
$\delta:=\frac12$ with $d=d_\infty$. Then
\[
  \{(x,t)\in A': N'(x,t)\in B'\}\ =\ \{x\in A: N(x)\in B\}\times[0,1].
\]
If the reachability instance is positive this set contains $(x,0)$ and
$(x,1)$ for some $x$, so its $d_\infty$-diameter is at least $1>\delta$
and the $\INV$ instance is negative. If it is negative the set is empty
and the $\INV$ condition holds vacuously.\footnote{The reduction therefore
establishes hardness through vacuity: in the negative case the preimage is
empty and the diameter condition is never tested. Hardness under the
promise that the preimage is non-empty - which is the case of practical
interest, since an unreachable output region raises no privacy question -
does not follow from this argument, and we leave it open. The same caveat
applies to any reduction to a diameter or Hausdorff-distance problem that
passes through emptiness, including the natural first attempt at
Problem~\ref{prob:IND}.} The construction is linear time.
\end{proof}

\begin{proposition}\label{prop:invp}
For each fixed positive integer $p$, $\INV(\ReLU)$ is co-NP-complete when
$d$ is the metric $d_p$; in particular for the Euclidean metric $d_2$. The
same holds when $p$ is part of the input and encoded in unary.
\end{proposition}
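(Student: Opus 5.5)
The hardness half needs nothing new. I reuse the reduction from the complement of $\NNReach(\ReLU)$ in the proof of Theorem~\ref{thm:inv}: append an idle input $t\in[0,1]$ and take $\delta:=\frac12$. In the positive case the preimage contains $(x,0)$ and $(x,1)$. Their $d_p$-distance is $1$ for every $p$, so the instance is negative. In the negative case the preimage is empty and the condition holds vacuously. This gives co-NP-hardness for every $d_p$ simultaneously, and also when $p$ is part of the input.

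For membership I put the complement in NP. Since $\delta\ge0$, the complement asks for $x,x'\in A$ with $N(x),N(x')\in B$ and $\sum_i|x_i-x'_i|^p>\delta^p$. Lemma~\ref{lem:uniform} does not apply, because the conclusion is not linear. The certificate will consist of:
\begin{enumerate}[label=(\roman*),leftmargin=*]
\item the activation patterns of the two copies of $N$;
\item a witness of large distance in the region $Q\subseteq\R^{2n}$ that these patterns cut out.
\end{enumerate}
Here $Q$ is a polyhedron given by linear equations and (strict or non-strict) inequalities. On $Q$ the objective $f(x,x')=\|x-x'\|_p^p$ is convex.

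Fix a nonempty $Q$ and write $\bar Q$ for its closure, a closed polyhedron with the same supremum of $f$ by continuity. If the recession cone of $\bar Q$ contains a direction $r=(r_x,r_{x'})$ with $r_x\ne r_{x'}$, then $f$ is unbounded on $Q$, so the complement holds. Such a direction can be certified by guessing it as an extreme ray of polynomial bit size and checking it by linear algebra. Otherwise every recession direction satisfies $r_x=r_{x'}$, and $f$ is constant along the lineality space $L$ of $\bar Q$. Intersecting $\bar Q$ with $L^\perp$ gives a pointed polyhedron on which $f$ attains the same values. A convex function bounded on a pointed polyhedron attains its supremum at a vertex. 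So the certificate is a vertex $v$, which has polynomial bit size (a basis of tight constraints), together with the check $f(v)>\delta^p$.

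This check is polynomial for fixed $p$, and also for $p$ in unary. The number $f(v)$ is a sum of $n$ rational $p$-th powers, and $\delta^p$ is a $p$-th power, each of bit size $O(p)$ times the input bit size. The comparison is therefore exact rational arithmetic. Finally, $f(v)>\delta^p$ at a point of $\bar Q$ yields, by continuity, a point of $Q$ itself with the same strict inequality. This uses that $Q$ is nonempty, which the certificate also witnesses by an LP feasibility check.

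I expect the main obstacle to be exactly this passage from $Q$ to a vertex. It has three sources of friction: the open output polyhedron and strict activation boundaries make $Q$ non-closed, $Q$ may contain lines, and $Q$ may be unbounded. The continuity argument and the split into the cases $r_x\ne r_{x'}$ and $L\subseteq\{r_x=r_{x'}\}$ are what I would write out carefully. The remainder, the bit-size bound on $\delta^p$ and $f(v)$, is routine. For $p=1$ the argument recovers the sign-vector route of Theorem~\ref{thm:inv}.
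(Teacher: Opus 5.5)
Your proposal is correct and matches the paper's proof essentially step for step. Hardness comes from the same padding reduction, valid for every $d_p$. Membership uses the same certificate: the activation patterns, together with either a recession direction along which $x-x'$ changes or a vertex of the pointed part (your $\bar Q\cap L^\perp$, the paper's $Q$ in $P=Q+L$). As in the paper, $p$ fixed or unary is needed only to make the check $f(v)>\delta^{\,p}$ exact rational arithmetic.

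Your closure argument is harmless but unnecessary. LP specifications are closed systems $Ax\le b$, and the activation patterns can use non-strict inequalities on both sides of each $\ReLU$ breakpoint. One small fix: where the recession cone is not pointed, an ``extreme ray'' need not exist. Use instead any rational point of $\{r\in\mathrm{rec}(\bar Q):\pm(r_{x,i}-r_{x',i})\ge1\}$, which has polynomial bit size.
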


\begin{proof}
Hardness is the reduction of Theorem~\ref{thm:inv}, which uses only that
the appended coordinate $t$ contributes a distance of $1$ between $(x,0)$
and $(x,1)$; that holds in every $d_p$.

For membership, consider the complement, which asks for $x,x'\in A$ with
$N(x),N(x')\in B$ and $d_p(x,x')>\delta$. Guess the activation patterns of
the two copies of $N$. What remains is a rational polyhedron
$P\subseteq\R^{2n}$ whose description has size polynomial in the instance,
together with the single constraint
\[
  f(x,x')\ :=\ \sum_{i=1}^{n}|x_i-x_i'|^{\,p}\ >\ \delta^{\,p},
\]
both sides of which are rational because $p$ is an integer. The hypothesis
that $p$ is fixed, or given in unary, is used here and only here: a $p$
encoded in binary would make $\delta^{\,p}$ and the summands rationals of
exponential bit length, and the verifier could not evaluate the certificate
in polynomial time. The function
$f$ is convex, being the $p$-th power of a norm composed with a linear map;
convexity needs $p\ge1$, which holds as $p$ is a positive integer.

We claim that the supremum of a convex function over a rational polyhedron
is either $+\infty$, witnessed by a recession direction, or attained at a
vertex of its pointed part. Write $P=Q+L$ with $L$ the lineality space and
$Q$ pointed. A convex function that is not constant on a line is unbounded
above on it, so either $f$ is constant along $L$, in which case the
supremum over $P$ equals the supremum over $Q$, or the supremum is
$+\infty$. On $Q=\operatorname{conv}(V)+\operatorname{cone}(R)$, a convex
function bounded above is non-increasing along every recession direction,
because a convex function of one variable that is bounded above on
$[0,\infty)$ is non-increasing there; so the supremum over $Q$ equals the
supremum over $\operatorname{conv}(V)$, which a convex function attains at
an extreme point, that is, at a vertex.

Vertices of a rational polyhedron and generators of its recession cone have
bit size polynomial in its description, so guessing one of them alongside
the activation patterns yields a polynomial-size certificate, and the
complement is in NP.
\end{proof}

\begin{remark}
It is worth saying why this does \emph{not} leave the polynomial hierarchy,
since the constraint $d_2(x,x')>\delta$ is semi-algebraic and not
semilinear, and elsewhere in this paper - Section~\ref{sec:weights} -
leaving linear arithmetic costs exactly that. The difference is where the
non-linearity sits. In Theorem~\ref{thm:wr} the products are \emph{inside}
the network, one per edge, and they compose: a chain of $d$ nodes realises
a monomial of degree $d$, which is what allows a network to encode an
arbitrary polynomial system. Here there is a single convex polynomial
inequality imposed on variables that are otherwise constrained only
linearly, and nothing composes. One convex constraint over a polyhedron is
not an existential theory of the reals.

Note also that the certificate is not an appeal to the trust region
subproblem. Maximising a convex quadratic over a polytope is NP-hard; what
gives the polynomial certificate is the \emph{location} of the maximiser,
not the cost of finding it.

For rational $p$ that is not an integer the argument breaks at the last
step: $\delta^{\,p}$ and the summands are then algebraic irrationals, and
comparing them is a question about sums of radicals, which is not known to
be decidable in polynomial time. Whether $\INV$ is in co-NP for such $p$ is
open.
\end{remark}

\paragraph{Practical reading.} $\INV$ is the certified form of the question
asked before publishing an embedding, a logit vector or a confidence
score: does releasing this number identify its source? Because the problem
is co-NP-complete, it is decidable by exactly the tools already used for
robustness, with a front-end that runs two copies of the network and
constrains their outputs to the same region. That is an unusually
favourable situation: a privacy property that costs no more than a
robustness property. The caveat is that $\INV$ is a worst-case, geometric
notion of leakage and is not a substitute for a statistical one such as
differential privacy \cite{DPBook}; a network can be inversion resistant in
this sense and still leak on the input distribution that actually occurs.

\subsection{Output indistinguishability}

Proposition~\ref{prop:NI} observed that exact non-interference is too
brittle to be a usable specification. The metric version compares what the
model's outputs look like on two populations.

\begin{problem}[$\IND(F)$]\label{prob:IND}
Given $N$, LP specifications $A_1,A_2$, a rational $\delta\ge0$ and
$d\in\{d_1,d_\infty\}$, decide whether the Hausdorff distance between
$N(A_1)$ and $N(A_2)$ is at most $\delta$; equivalently, whether
\[
  \forall x\in A_1\ \exists x'\in A_2:\ d(N(x),N(x'))\le\delta
  \qquad\text{and symmetrically.}
\]
\end{problem}

\begin{proposition}\label{prop:IND}
$\IND(F)\in\Pih2$ for semilinear $F$ with rational coefficients.
\end{proposition}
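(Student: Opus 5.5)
The plan is to apply Lemma~\ref{lem:uniform} with the prefix $\forall\,\exists$, after writing the Hausdorff condition as a single $\forall\exists$ sentence of real addition.

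\emph{Step 1: merge the two directions.} The condition is a conjunction of two $\forall\exists$ statements, one from $A_1$ to $A_2$ and one back. I will combine them into one statement with a selector. Either use a fresh universal bit-free trick, quantifying universally over the pair $(x_1,x_2)\in A_1\times A_2$ and existentially over the pair $(x_2',x_1')\in A_2\times A_1$, with matrix
\[
  d\bigl(N(x_1),N(x_2')\bigr)\le\delta\ \wedge\ d\bigl(N(x_2),N(x_1')\bigr)\le\delta .
\]
This is equivalent to the conjunction, since the universal variables are independent and each existential witness depends only on its own universal variable. Alternatively, note that $\Pih2$ is closed under conjunction. Either route suffices.

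\emph{Step 2: make the matrix a linear formula of linear size.} Introduce the node-value vectors of the four copies of $N$ through the graph formula $G$ of Lemma~\ref{lem:uniform}. As in the lemma's proof, the node values of the universally quantified copies join the universal block under an implication $G\wedge A\rightarrow\cdots$. Those of the existential copies join the existential block under a conjunction. Neither adds an alternation, because the node values are functionally determined by the inputs.

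For $d_\infty$, the constraint $d(y,y')\le\delta$ is the conjunction of the $2m$ inequalities $\pm(y_i-y'_i)\le\delta$. For $d_1$, the direct expansion over sign vectors is exponential, which was the obstacle in Theorem~\ref{thm:inv}. Here the constraint sits on the existential side, so it can be handled by existentially quantifying auxiliary variables $s_i$ with $s_i\ge y_i-y'_i$, $s_i\ge y'_i-y_i$ and $\sum_i s_i\le\delta$. These variables join the existential block and keep the formula linear in size.

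\emph{Step 3: conclude.} The result is a $\forall\exists$ sentence of real addition of polynomial size in the instance. Theorem~\ref{thm:sontag} with $k=2$ and $Q_1=\forall$ then places it in $\Pih2$.

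The only delicate point is the $d_1$ encoding. It works precisely because the distance constraint appears positively under the existential quantifier, so absolute values can be witnessed rather than expanded.
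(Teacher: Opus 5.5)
Your proposal is correct and follows the paper's argument: Lemma~\ref{lem:uniform} with the prefix $\forall x\,\exists x'$, after combining the two symmetric conjuncts into a single $\forall\exists$ sentence. It is more detailed than the paper's one-line proof, and your existential slack variables $s_i$ for $d_1$ cover a point that proof leaves unaddressed.

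There is one slip in your first merging route. Relativising the universal pair jointly to $A_1\times A_2$ is not equivalent to the conjunction when exactly one of $A_1,A_2$ is empty: the merged sentence is then vacuously true, while the Hausdorff condition fails. Either keep the relativisations separate, as in $\forall x_1\forall x_2\,\exists x_2'\exists x_1'\,[(x_1\in A_1\rightarrow\cdots)\wedge(x_2\in A_2\rightarrow\cdots)]$, or use your second route, closure of $\Pih2$ under intersection, which is sound.
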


\begin{proof}
Lemma~\ref{lem:uniform} with the prefix $\forall x\,\exists x'$; the two
symmetric conjuncts share the prefix shape and may be combined.
\end{proof}

We conjecture $\Pih2$-completeness. The natural route is the reduction of
Theorem~\ref{thm:trig} with the roles of the quantifiers exchanged, and the
obstruction is the same relaxation phenomenon: the inner $\exists$ ranges
over a continuum.

\paragraph{Practical reading.} $\IND$ is the certified counterpart of an
audit for attribute inference or for membership inference
\cite{MembershipInference}. If the images of two
populations - inputs with and without a protected attribute, or members
and non-members of the training set - are within $\delta$ in Hausdorff
distance, then no adversary observing only the output can separate them by
more than $\delta$, whatever its computational power. That is a strong
guarantee of exactly the kind regulation asks for and empirical audits
cannot give. The price is one quantifier alternation: unlike
$\INV$, this is not a problem the current verification stack can be
pointed at. We regard closing the gap between $\NI$ (co-NP-complete but
unsatisfiable in practice) and $\IND$ ($\Pih2$ but meaningful) as the most
useful open problem in this section.

\subsection{Watermark removability}

Model watermarking embeds a signature either in the parameters or, in the
variant we formalise, in a set of key inputs with prescribed outputs
obtained by deliberate backdooring \cite{AdiWatermark}, so that the owner
of a stolen copy can demonstrate provenance. The security
question is whether an attacker can produce a model that is as good as the
original on real data but no longer answers the keys correctly. Because
watermark keys are deliberately out of distribution, the key inputs lie
outside the operational region.

\begin{problem}[$\WM(F)$]\label{prob:wm}
Given an $F$-network $N$, an LP specification $A$, a finite key set
$K=\{(k_1,y_1),\dots,(k_r,y_r)\}$ with $k_i\notin A$ and $N(k_i)=y_i$, and
a size bound $s\in\N$, decide whether there is an $F$-network $N'$ with at
most $s$ nodes such that $N'(x)=N(x)$ for all $x\in A$ and $N'(k_i)\neq
y_i$ for some $i$.
\end{problem}

\begin{proposition}\label{prop:wm}
If the weights of the candidate network $N'$ are restricted to rationals of
bit size polynomial in the instance, then $\WM(F)\in\Sig2$ for semilinear
$F$ with rational coefficients.
\end{proposition}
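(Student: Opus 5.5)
The plan is to exhibit the $\Sig2$ prefix directly: guess $N'$, then verify with a universal block over inputs, and absorb the single existential condition on the keys into the guess. Because the weights are restricted to rationals of polynomial bit size, and because we may assume $s$ is bounded by a polynomial in the instance (otherwise $N$ itself, padded, is irrelevant, so we only need networks of size at most $s$; if $s$ is given in binary we will note that a network with more nodes than the instance allows cannot be written down in the certificate, and restrict to $s$ at most polynomial, or observe that nodes beyond the number needed to realise $N$ plus a constant-size modification gadget can be assumed absent), the whole network $N'$, i.e.\ its layered graph, activation assignment from the finite set $F$, weights and biases, is a certificate of polynomial size. The existential player guesses this certificate together with an index $i\le r$.

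Given the guess, the verification has two parts. First, $N'(k_i)\neq y_i$: here $k_i$ is a fixed rational point, and evaluating a semilinear-activation network with rational parameters at a rational point is exact and polynomial-time, since each activation with rational coefficients maps rationals to rationals of linearly growing bit size per layer; to keep this bounded I would note that bit sizes add up over layers but stay polynomial because the depth is at most $s$. So this check is deterministic polynomial time. Second, $N'(x)=N(x)$ for all $x\in A$: with $N'$ now fixed, this is a network-equivalence question restricted to the polyhedron $A$, a single universal block over $x$ of the form $\forall x\,\forall \bar y\,\forall\bar y'\,(G_N\wedge G_{N'}\wedge A(x)\rightarrow \text{out}(\bar y)=\text{out}(\bar y'))$, which by Lemma~\ref{lem:uniform} with $k=1$ is decidable in co-NP, the formula being of size linear in $|N|+|N'|$, hence polynomial.

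Combining, $\WM(F)$ is decided by an NP machine with a co-NP oracle, so $\WM(F)\in\mathrm{NP}^{\mathrm{co\text-NP}}=\Sig2$; alternatively one can write the whole property as an $\exists\forall$ sentence of real addition where the outer block ranges over the bits of the guessed network (encoded by Boolean-valued real variables) and apply Theorem~\ref{thm:sontag}, but the oracle formulation avoids having to encode a multiplication of guessed weights by node values, which is exactly the non-linearity of Section~\ref{sec:weights}. That is the point where the hypothesis is used: the weights of $N'$ must be \emph{constants} once guessed, not quantified reals, for the inner check to stay in real addition.

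The main obstacle I expect is the size of the certificate: ensuring that the node bound $s$ and the weight bit sizes together give a polynomial-size description, and that exact evaluation at the keys does not blow up. I would handle it by assuming $s$ is encoded in unary or bounded by the instance size (or by arguing that larger $s$ only matters up to a polynomial bound), and by the stated bit-size restriction on weights.
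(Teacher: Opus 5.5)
Your proposal is correct and is essentially the paper's proof: guess $N'$ (of polynomial size by the bit-size hypothesis), evaluate the key conditions directly at the rational points $k_i$, and decide $N'|_A\equiv N|_A$ with a co-NP oracle via Lemma~\ref{lem:uniform} with $k=1$, giving $\mathrm{NP}^{\mathrm{co\text-NP}}=\Sig2$. The paper also takes the polynomial size of $N'$ for granted, so your explicit assumption that $s$ is polynomially bounded (e.g.\ given in unary) matches its implicit one; your parenthetical alternative, that a large $s$ can be cut down to $|N|$ plus a constant-size gadget, is not justified for arbitrary semilinear $F$ and is best dropped.
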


\begin{proof}
Guess $N'$, which is then of polynomial size, and verify
$N'|_A\equiv N|_A$ with a co-NP oracle by Lemma~\ref{lem:uniform}; the key
conditions are evaluated directly.
\end{proof}

The bit-size restriction is the same one implicit in the $\Pih2$ upper
bound for $\MIN$ in \cite[Thm.~11]{WurmJ}, and removing it is open in both
cases. We conjecture $\Sig2$-completeness, by adapting the
$\Sigma_2^{\mathrm P}$-hardness proof for minimum equivalent DNF
\cite{Umans} through the reduction chain relating $\NE$ and $\MIN$ in
\cite{WurmJ}.

\paragraph{Practical reading.} A watermarking scheme is usually validated
empirically, by showing that fine-tuning, pruning and extraction attacks
fail to remove the mark \cite{SoKWatermark}. $\WM$ is the corresponding worst-case statement, and its shape
tells us what such a validation can and cannot establish: since the
attacker's question is $\exists N'\forall x$, an equivalence checker alone
- which answers the inner $\forall$ - can never certify that a
watermark is irremovable, only that a \emph{specific} attacker's candidate
failed. Provenance claims built on watermarking therefore rest on a
$\Sig2$ assumption, and should be stated as such. The corresponding
question for the parameter-space attacker, in which $N'$ ranges over small
perturbations of $N$ rather than over all small networks, inherits the
arithmetic of Section~\ref{sec:weights} and is the more realistic threat
model.

\section{Repair}\label{sec:repair}

A network that fails certification is not discarded; it is patched, and
the patch must be small enough that the rest of the safety case survives.

\begin{problem}[$\REP(F)$]\label{prob:rep}
Given a parameterised $F$-architecture $\mathcal A$ with nominal parameters
$\theta^0$, LP specifications $A,B$, a budget $k\in\N$ and a parameter
specification $W$, decide whether there is $\theta\in W$ differing from
$\theta^0$ in at most $k$ coordinates with $N_\theta(x)\in B$ for all
$x\in A$.
\end{problem}

\begin{proposition}\label{prop:rep}
Let $F$ be semilinear with rational coefficients. If $W$ confines the
modified parameters to a finite rational grid, then $\REP(F)$ lies in
$\Sig2$. If $W$ is an arbitrary LP parameter specification, then $\REP(F)$
lies in $\EFR$.
\end{proposition}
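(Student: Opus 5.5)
Both claims follow by writing the problem out as a two-block sentence $\exists\theta\,\forall x$ and then using whichever tool matches the arithmetic of the outer block. In the grid case the outer block is a finite guess. In the LP case the formula is polynomial and we land in $\EFR$.

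\emph{Grid case.} A certificate consists of a set $S$ of at most $k$ parameter indices together with values $\theta_p$, $p\in S$, drawn from the finite rational grid prescribed by $W$. All other coordinates equal $\theta^0$. Such a certificate has polynomial bit size, since each grid value is a rational written in the instance. Checking $\theta\in W$ is a linear feasibility test on a fixed rational vector and takes polynomial time. Once $\theta$ is fixed, $N_\theta$ is an ordinary $F$-network with rational weights of polynomial size. The remaining condition $\forall x\in A:\ N_\theta(x)\in B$ is then a single universal block, so it is in co-NP by Lemma~\ref{lem:uniform} with $k=1$. Hence $\REP(F)\in\mathrm{NP}^{\mathrm{co\text-NP}}=\Sig2$, exactly as in the membership half of Theorem~\ref{thm:trig}.

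\emph{General LP case.} The plan is to write the problem as an $\exists\forall$ sentence over the reals. The sparsity constraint is combinatorial, so we first guess (or take a disjunction over) the support $S$ of size at most $k$. Because the guess is polynomial-size, it can be absorbed into the existential block: we introduce the real variables $\theta_p$ for $p\in S$ and fix $\theta_p=\theta^0_p$ for $p\notin S$. Alternatively, we avoid guessing by adding Boolean-like selector variables $s_p$ with $s_p(1-s_p)=0$, $\sum_p s_p\le k$ and $(1-s_p)(\theta_p-\theta^0_p)=0$. These are polynomial constraints, which is permitted in $\EFR$.

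The existential block contains $\theta$ (and the $s_p$). The universal block contains $x$ and the node values $\bar y$. The matrix is
\[
  \bigl(\theta\in W\wedge\textstyle\sum_p s_p\le k\wedge\cdots\bigr)\wedge\bigl(G_\theta(x,\bar y)\wedge A(x)\rightarrow B(\bar y_{\mathrm{out}})\bigr).
\]
Here $G_\theta$ is the graph formula from the proof of Lemma~\ref{lem:uniform}, with each weight replaced by the variable $\theta_{\pi(e)}$. Each node equation $z_v=\sum_u\theta_{\pi(uv)}y_u+b_v$ is then bilinear, as in the membership part of Theorem~\ref{thm:wr}, and the semilinear activations contribute constant-size Boolean combinations of polynomial (in fact linear in $z_v,y_v$) conditions. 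The universal quantification over $\bar y$ creates no extra alternation, because the node values are functionally determined by $(\theta,x)$. The formula has polynomial size, so the problem reduces to deciding a sentence of the form $\exists\forall$ over the reals with polynomial atoms, which is in $\EFR$ by definition \cite{BeyondER}.

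The main obstacle is making sure nothing forces an alternation beyond $\exists\forall$. The point to check is that the node variables can be placed in the universal block using an implication rather than an equivalence. Functional determinacy makes ``for all $\bar y$ satisfying $G_\theta$'' equivalent to ``for the unique $\bar y$''. A secondary point is how the $k$-sparsity is encoded: the selector-variable encoding keeps everything inside the first-order reals without appealing to nondeterministic guessing outside the class.
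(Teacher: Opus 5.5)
Your proposal is correct and follows the paper's proof. In the grid case you guess the at most $k$ modified parameters and their grid values, then hand the remaining single universal block to Lemma~\ref{lem:uniform}; in the LP case you write the condition as $\exists\theta\,\forall x\,\forall\bar y$ over bilinear constraints, which is the defining form of $\EFR$. Your selector-variable encoding of the budget $k$ makes explicit a point the paper's one-line argument passes over, and it is the version to keep: the alternative of a disjunction over all supports has exponential size when $k$ is part of the input.
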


\begin{proof}
In the first case guess the $k$ modified parameters and their grid values,
then apply Lemma~\ref{lem:uniform} with $k=1$ to the remaining $\VIP$
instance. In the second, the condition is
$\exists\theta\,\forall x\,\forall\bar y$ over a system of polynomial
constraints, which is the defining form of $\EFR$ \cite{BeyondER}.
\end{proof}

We conjecture completeness in both cases.

\paragraph{Practical reading.} Repair tools
\cite{ProvableRepair,MinimalMod} search over weights continuously, by
linear programming or gradient descent with a verification oracle in the
loop, and then round. Proposition~\ref{prop:rep} says that the rounding is not a
detail of implementation but a change of problem: with a quantised
parameter search the problem is in the polynomial hierarchy, and with a
continuous one it is in the real hierarchy one level above $\ER$. This is
the same phenomenon as in Section~\ref{sec:relax}, and as in the
finite-precision setting of \cite{ALS}, in the parameter space rather than
the input space, and it suggests the same methodological conclusion -
that a tool should decide, and state, which of the two problems it is
solving.

\section{Summary and open problems}

\begin{table}[h]
\centering
\begin{tabular}{llll}
\toprule
problem & prefix & parameter & status\\
\midrule
$\NNReach$ & $\exists x$ & - & NP-complete \cite{SL}\\
$\VIP$, $\NE$ & $\forall x$ & - & co-NP-complete \cite{WurmJ}\\
$\NI$ & $\forall x$ & - & co-NP-complete (Prop.~\ref{prop:NI})\\
$\MONO$, $\CF$ & $\forall x$ & - & co-NP-complete
(Prop.~\ref{prop:mono}, \cite{WurmRules})\\
$\INV$ & $\forall x\forall x'$ & - & co-NP-complete
(Thm.~\ref{thm:inv})\\
$\MIN$ & $\exists N'\forall x$ & structure & in $\Pih2$ \cite{WurmJ}\\
$\TRIG$ (quantised) & $\exists\tau\forall x$ & input &
$\Sig2$-complete (Thm.~\ref{thm:trig})\\
$\TRIG$ (real) & $\exists\tau\forall x$ & input & in $\Sig2$; hardness
open\\
$\UVUL$ & $\forall x\exists\tau$ & input & in $\Pih2$\\
$\IND$ & $\forall x\exists x'$ & - & in $\Pih2$\\
$\WM$ & $\exists N'\forall x$ & structure & in $\Sig2$
(Prop.~\ref{prop:wm})\\
$\REP$ (quantised) & $\exists\theta\forall x$ & parameters & in $\Sig2$\\
$\REP$ (continuous) & $\exists\theta\forall x$ & parameters & in $\EFR$\\
$\WRE$ & $\exists\theta\exists x$ & parameters & $\ER$-complete
(Thm.~\ref{thm:wr})\\
$\WRA$ ($\ReLU$) & $\forall\theta\forall x$ & parameters &
$\FER$-complete (Thm.~\ref{thm:wr})\\
$\WRA$ ($\mathrm{id}$) & $\forall\theta\forall x$ & parameters &
in $\FER$; hardness open\\
\bottomrule
\end{tabular}
\caption{The problems of this paper in the schema of
Section~\ref{sec:schema}. All completeness entries are for $\ReLU$ unless
stated otherwise. Rows above the $\MIN$ line are first-level;
rows with a parameter-space parameter leave the polynomial hierarchy.}
\label{tab:schema}
\end{table}

The open problems we would rank highest are the following.

\begin{enumerate}[label=(\arabic*),leftmargin=*]
\item\label{op:sharing} Is $\WRE$ still $\ER$-hard for architectures of
sharing degree $1$ - that is, when every parameter is read on a single
edge, so that a box specification gives each of them an independent
tolerance? The construction of Theorem~\ref{thm:wr} needs one parameter
read twice, and without it the node values available are sums of path
monomials in which each deeper parameter occurs in a single position, which
appears to express interval constraints rather than products. Either a
hardness proof or a polynomial-hierarchy upper bound would be interesting.
\item What is the complexity of the single-corrupted-parameter fault model,
in which exactly one budget is non-zero? The node values are then affine in
the unknown, so Theorem~\ref{thm:wr} does not apply; NP-hardness looks
likely and $\ER$-hardness unlikely, but we have neither. Relatedly, is
$\WRA(\{\mathrm{id}\})$ $\FER$-hard (Remark~\ref{rem:notclaimed})?
\item Is real-valued $\TRIG$ $\Sig2$-hard? By Proposition~\ref{prop:relax}
a proof must defeat the abstention strategy, and a proof of the opposite -
that the relaxation is strictly easier - would be more interesting still,
since it would say that continuous trigger search is solving a tractable
shadow of the stated problem.
\item Is $\IND$ $\Pih2$-complete, and is there a specification between
$\NI$ and $\IND$ that is both meaningful for trained models and decidable
at the first level? Is $\INV$ co-NP-hard under the promise that the
preimage is non-empty?
\item Are $\WM$ and $\REP$ complete for their classes, and can the
bit-size restriction in Proposition~\ref{prop:wm} - and in the
corresponding bound for $\MIN$ - be removed?
\item The counting versions. Quantitative information flow, demographic
parity and probabilistic certification all ask for a proportion rather than
a verdict, and none of the problems above has been studied in its counting
form, although approximate counting has been applied to neural networks
with security applications in mind \cite{Baluta}. We expect $\#\mathrm P$-completeness for the first-level problems
over $\ReLU$ networks, and the sigmoidal case appears to need a counting
class over $\ER$ that does not yet exist.
\end{enumerate}

\section*{Declarations}

\paragraph{Competing interests.} The author declares that he has no
competing interests.

\paragraph{Use of generative artificial intelligence.} A large language
model (Claude, Anthropic) was
used as a research assistant during the preparation of this manuscript.
Its use went beyond language editing: it was used to propose candidate
formalisations of several of the decision problems studied here, to draft
arguments and expository text, and to assemble candidate references. Every
definition, problem statement, theorem and proof in this paper has been
checked by the author, who takes full responsibility for the correctness of
the mathematical content and for the manuscript as a whole. The model is
not an author and is not accountable for the work.



\end{document}